\newenvironment{protocol}{
  \begin{mdframed}[style=figstyle]}{
  \end{mdframed}}
\newtheorem{theorem}{Theorem}
\newtheorem{corollary}[theorem]{Corollary}
\theoremstyle{definition}
\newtheorem{action}{Group Action}
\newtheorem{definition}[theorem]{Definition}
\newtheorem{assumption}{Assumption}
\newtheorem{observation}{Observation}
\newtheorem{remark}[theorem]{Remark}
\newcommand{\abs}[1]{\left|{#1}\right|}
\newcommand{\mode}[1]{\textnormal{\textsf{#1}}}
\newcommand{\F}{\mathbb{F}}
\newcommand{\integer}{\mathbb{Z}}
\newcommand{\real}{\mathbb{R}}
\newcommand{\Matrix}{\mathrm{M}}
\newcommand{\Tensor}{\mathrm{T}}
\newcommand{\id}{\mathrm{id}}
\newcommand{\probsty}[1]{\textsf{#1}\xspace}
\newcommand{\GI}{\probsty{GI}}
\newcommand{\TI}{\probsty{TI}}
\newcommand{\DTI}{\probsty{DTI}}
\newcommand{\TTI}{\probsty{3TI}}
\newcommand{\gainv}{\probsty{GA-Inv}}
\newcommand{\gapr}{\probsty{GA-PR}}
\newcommand{\cA}{\mathcal{A}}
\newcommand{\cI}{\mathcal{I}}
\newcommand{\cX}{\mathcal{X}}
\newcommand{\cY}{\mathcal{Y}}
\newcommand{\cK}{\mathcal{K}}
\newcommand{\cP}{\mathcal{P}}
\newcommand{\cS}{\mathcal{S}}
\newcommand{\class}[1]{\ensuremath{\mathrm{#1}}\xspace}
\newcommand{\NP}{\class{NP}}
\newcommand{\coAM}{\class{coAM}}
\newcommand{\secpar}{\lambda}
\newcommand{\usecpar}{1^\secpar}
\newcommand{\bit}{\{0,1\}}
\newcommand{\abbrsty}[1]{\ensuremath{\mathrm{#1}}\xspace}
\newcommand{\OWA}{\abbrsty{OWA}}
\newcommand{\PRA}{\abbrsty{PRA}}
\newcommand{\PROD}{\abbrsty{PROD}}
\newcommand{\INV}{\abbrsty{INV}}
\newcommand{\QRO}{\abbrsty{QRO}}
\newcommand{\GLAT}{\abbrsty{GLAT}}
\newcommand{\pg}{\mathcal{G}}
\newcommand{\params}{\texttt{params}}
\newcommand{\attack}{\mathcal{A}}
\newcommand{\prover}{\mathcal{P}}
\newcommand{\verifier}{\mathcal{V}}
\newcommand{\hvsim}{\mathcal{S}}
\newcommand{\dualkey}{{\widetilde{pk}}}
\newcommand{\signlist}{\mathcal{L}}
\newcommand{\lencom}{{\ell_{\mathrm{in}}}}
\newcommand{\lench}{{\ell_{\mathrm{ch}}}}
\newcommand{\lenr}{{\ell_{\mathrm{re}}}}
\newcommand{\prep}{\ell}
\newcommand{\algstyle}[1]{\textsc{#1}\xspace}
\newcommand{\ids}{\algstyle{ID}}
\newcommand{\gaids}{\algstyle{GA-ID}}
\newcommand{\kg}{\algstyle{KG}}
\newcommand{\dkg}{\algstyle{KG}^*}
\newcommand{\skg}{\algstyle{KeyGen}}
\newcommand{\sign}{\algstyle{Sign}}
\newcommand{\vrfy}{\algstyle{Verify}}
\newcommand{\unruhsign}{\algstyle{SIGN}}
\newcommand{\gasign}{\algstyle{GA-SIGN}}
\newcommand{\fssign}{\algstyle{FS-SIGN}}
\newcommand{\gafssign}{\algstyle{GA-FS-SIGN}}
\newcommand{\prg}{\algstyle{PRG}}
\newcommand{\ggm}{\algstyle{GGM}}
\let\O\undefined
\let\S\undefined
\DeclareMathOperator{\O}{\mathrm{O}}
\DeclareMathOperator{\S}{\mathrm{S}}
\DeclareMathOperator{\adv}{\mathbf{Adv}}
\DeclareMathOperator{\GL}{\mathrm{GL}}
\DeclareMathOperator{\SL}{\mathrm{SL}}
\DeclareMathOperator{\poly}{\mathrm{poly}}
\DeclareMathOperator{\negl}{\mathrm{negl}}
\title{\bf General Linear Group Action on Tensors: A Candidate for Post-Quantum
  Cryptography}
\author{Zhengfeng Ji \thanks{ Centre for Quantum
    Software and Information, School of Software, Faculty of
    Engineering and Information Technology, University of Technology
    Sydney, NSW, Australia. {\tt Zhengfeng.Ji@uts.edu.au} } \and
  Youming Qiao \thanks{ Centre for Quantum Software and Information,
    School of Software, Faculty of Engineering and Information
    Technology, University of Technology Sydney, NSW, Australia. {\tt
      Youming.Qiao@uts.edu.au} } \and Fang Song \thanks{ Department of
    Computer Science and Engineering, Texas A\&M University, Texas,
    USA. {\tt fang.song@tamu.edu} } \and Aaram Yun \thanks{ Department
    of Cyber Security, Division of Software Science and Engineering,
    Ewha Womans University, Seoul, Korea. {\tt aaramyun@ewha.ac.kr} }
}
\date{}
\begin{document}
\clearpage\maketitle
\thispagestyle{empty}

\begin{abstract}

  Starting from the one-way group action framework of Brassard and
  Yung (Crypto '90), we revisit building cryptography based on group
  actions. Several previous candidates for one-way group actions no
  longer stand, due to progress both on classical algorithms (e.g.,
  graph isomorphism) and quantum algorithms (e.g., discrete
  logarithm).

  We propose the \emph{general linear group action on tensors} as a
  new candidate to build cryptography based on group actions. Recent
  works (Futorny--Grochow--Sergeichuk \emph{Lin.\ Alg.\ Appl.}, 2019) suggest that 
  the underlying algorithmic problem, the
  \emph{tensor isomorphism problem}, is the hardest one among several
  isomorphism testing problems arising from areas including coding theory, 
  computational group 
  theory, and multivariate cryptography.
  We present evidence to justify the viability of this proposal from
  comprehensive study of the state-of-art heuristic algorithms,
  theoretical algorithms and hardness results, as well as quantum
  algorithms.

  We then introduce a new notion called \emph{pseudorandom group
    actions} to further develop group-action based
  cryptography. Briefly speaking, given a group $G$ acting on a set
  $S$, we assume that it is hard to distinguish two distributions of
  $(s, t)$ either uniformly chosen from $S\times S$, or where $s$ is
  randomly chosen from $S$ and $t$ is the result of applying a random
  group action of $g\in G$ on $s$. This subsumes the classical
  decisional Diffie-Hellman assumption when specialized to a
  particular group action. We carefully analyze various attack
  strategies that support the general linear group action on tensors
  as a candidate for this assumption.

  Finally, we establish the quantum security of several cryptographic
  primitives based on the one-way group action assumption and the
  pseudorandom group action assumption.

\end{abstract}

\newpage
\setcounter{page}{1}
\section{Introduction}
\label{sec:intro}

Modern cryptography has thrived thanks to the paradigm shift to a
formal approach: precise \emph{definition} of security and
mathematically sound \emph{proof} of security of a given construction
based on accurate \emph{assumptions}.  Most notably, computational
assumptions originated from specific algebraic problem such as
factoring and discrete logarithm have enabled widely deployed
cryptosystems.

Clearly, it is imperative to base cryptography on diverse problems to reduce the
risk that some problems turn out to be easy.
One such effort was by Brassard and Yung soon after the early development of
modern cryptography~\cite{BY90}.
They proposed an approach to use a \emph{group action} to construct a
\emph{one-way function}, from which they constructed cryptographic primitives
such as bit commitment, identification and digital signature.
The abstraction of one-way group actions (\OWA) not only unifies the assumptions
from factoring and discrete logarithm, but more importantly Brassard and Yung
suggested new problems to instantiate it such as the graph isomorphism problem
(\GI).
Since then, many developments fall in this
framework~\cite{Pat96,Cou06,HS07,MRV07}.
In particular, the work of Couveignes~\cite{Cou06} can be understood as a
specific group action based on isogenies between elliptic curves, and it has
spurred the development of \emph{isogeny-based} cryptography~\cite{FJP14}.

However, searching for concrete group actions to support this approach turns out
to be a tricky task, especially given the potential threats from attackers
capable of quantum computation.
For graph isomorphism, there are effective heuristic solvers~\cite{McK80,MP14}
as well as efficient \emph{average-case} algorithms~\cite{BES80}, not to mention
Babai's recent breakthrough of a \emph{quasipolynomial}-time
algorithm~\cite{Bab16}.
Shor's celebrated work solves discrete logarithm and factoring in polynomial
time on a \emph{quantum} computer~\cite{Sho94}, which would break a vast
majority of public-key cryptography.
The core technique, \emph{quantum Fourier sampling}, has proven powerful and can
be applied to break popular symmetric-key cryptosystems as well~\cite{KLLN16}.
A \emph{subexponential-time} quantum algorithm was also found for computing
isogenies in ordinary curves~\cite{CJS14}, which attributes to the shift to
\emph{super-singular} curves in the recent development of isogeny-based
cryptography~\cite{GV18}.
In fact, there is a considerable effort developing \emph{post-quantum}
cryptography that can resist quantum attacks.
Besides {isogeny-based}, there are popular proposals based on discrete
\emph{lattices}, \emph{coding} problems, and \emph{multivariate
  equations}~\cite{BBD09,Chen16}.

\subsection{Overview of our results}

In this paper, we revisit building cryptography via the framework of group
actions and aim to provide new candidate and tools that could serve as
\emph{quantum-safe} solutions. Our contribution can be summarized in the
following three aspects.

First, we propose a family of group actions on \emph{tensors} of order at least
three over a finite field as a new candidate for one-way actions.
We back up its viability by its relation with other group actions, extensive
analysis from heuristic algorithms, provable algorithmic and hardness results,
as well as demonstrating its resistance to a standard quantum Fourier sampling
technique.

Second, we propose the notion of \emph{pseudorandom group actions}
(\PRA) that extends the scope of the existing group-action framework.
The \PRA assumption can be seen as a natural generalization of the
Decisional Diffie-Hellman (DDH) assumption.  We again instantiate it
with the group action on tensors, and we provide evidence (in addition
to those for one-wayness) from analyzing various state-of-art
attacking strategies.

Finally, based on any \PRA, we show realization of several primitives
in \emph{minicrypt} such as digital signatures via the Fiat-Shamir
transformation and pseudorandom functions. We give complete security
proofs against \emph{quantum} adversaries, thanks to recent advances
in analyzing quantum \emph{superposition} attacks and the quantum
random oracle model~\cite{Zhandry:2012ac,UnruhAC17,Song:2017aa}, which
is known to be a tricky business. 
Our constructions based on \PRA are
more efficient than known schemes based on one-way group actions. As a
side contribution, we also describe formal \emph{quantum-security}
proofs for several \OWA-based schemes including identification and signatures, 
which are missing in the literature
and deserve some care.

In what follows, we elaborate on our proposed group action based on
tensors and the new pseudorandom group action assumption. Readers interested in 
the cryptographic primitives supported by \PRA are referred to 
Section~\ref{sec:quantum_pra}.

\paragraph{The general linear group action on tensors.}
The candidate group action we propose is based on \emph{tensors}, a central
notion in quantum theory.
In this paper, a $k$-tensor $T$ is a multidimensional array with $k$ indices
$i_1, i_2, \ldots, i_k$ over a field $\F$, where $i_j \in \{1, 2, \ldots, d_j\}$
for $j=1, 2, \ldots, k$.
For a tuple of indices $(i_1, i_2, \ldots, i_k)$, the corresponding component of
$T$ denoted as $T_{i_1, i_2, \ldots, i_k}$ is an element of $\F$.
The number $k$ is called the order of the tensor.
A matrix over field $\F$ can be regarded as a tensor of order two.

We consider a natural group action on $k$-tensors that represents a local change
of basis.
Let $G = \prod_{j=1}^k \GL(d_j,\F)$ be the direct product of general linear
groups.
For $M = \bigl( M^{(j)} \bigr)_{j=1}^k \in G$, and a $k$-tensor $T$, the action
of $M$ on $T$ is given by
\begin{equation*}
  \alpha: (M, T) \mapsto \widehat{T}, \text{ where } \widehat{T}_{i_1, i_2, \ldots, i_k}
  = \sum_{l_1, l_2, \ldots, l_k} \biggl( \prod_{j=1}^k M^{(j)}_{i_j,l_j} \biggr)
  T_{l_1, l_2, \ldots, l_k}.
\end{equation*}
We shall refer to the above group action as the \emph{general linear
  group action on tensors} (\GLAT) of dimensions $(d_1, \dots, d_k)$
over $\F$, or simply $\GLAT$ when there is no risk of confusion. We
will consider group actions on tensors of order at least three, as the
problem is usually easy for matrices. In fact, in most of the cases,
we focus on $3$-tensors which is most studied and believed to be hard.

\paragraph{General linear actions on tensors as a candidate for one-way group
  actions.}

We propose to use \GLAT as an instantiation of one-way group actions.
Roughly speaking, a group action is called a \emph{one-way group action} (OWA in
short), if for a random $s\in S$, a random $g\in G$, $t=g\cdot s$, and any
polynomial-time adversary $\attack$ given $s$ and $t$ as input, $\attack$
outputs a $g'\in G$ such that $t=g'\cdot s$ only with negligible probability.

Breaking the one-wayness can be identified with solving some
isomorphism problem. Specifically, two $k$-tensors $T$ and
$\widehat{T}$ are said to be isomorphic if there exists an $M\in G$
such that $\widehat{T} = \alpha(M,T)$. We define the decisional tensor
isomorphism problem (\DTI) as deciding if two given $k$-tensors are
isomorphic; and the search version (\TI) is tasked with computing an
$M\in G$ such that $\widehat{T} = \alpha(M, T)$ if there is
one. Clearly, our assumption that \GLAT is a one-way group action is
equivalent to assuming that \TI is hard for random $M\in G$, random
$k$-tensor $S$, and $T := \alpha(M,S)$.  We focus on
the case when the order $k$ of the tensor equals three and the
corresponding tensor isomorphism problem is abbreviated as \TTI.  We
justify our proposal from multiple routes; see
Section~\ref{sec:owa_candidate} for a more formal treatment.

\begin{enumerate}
\item The $3$-tensor isomorphism problem can be regarded as ``the most
  difficult'' one among problems about testing isomorphism between objects, such
  as polynomials, graphs, linear codes, and groups, thanks to the recent work of
  Futorny, Grochow, and Sergeichuk~\cite{FGS19}.
  More specifically, it was proven in~\cite{FGS19} that several isomorphism
  problems, including graph isomorphism, quadratic polynomials with $2$ secrets
  from multivarite cryptography~\cite{Pat96}, $p$-group isomorphism from
  computational group theory~\cite{OBr94,LQ17}, and linear code permutation
  equivalence from coding theory~\cite{PR97,Sen00}, all reduce to \TTI; cf.
  Observation~\ref{obs:reductions}.
  Note that testing isomorphism of quadratic polynomials with two secrets has
  been studied in multivariate cryptography for more than two
  decades~\cite{Pat96}.
  Isomorphism testing of $p$-groups has been studied in computational group
  theory and theoretical computer science at least since the 1980's (cf.
  \cite{OBr94,LQ17}).
  Current status of these two problems then could serve as evidence for the
  difficulty of \TTI.

\item Known techniques that are effective on \GI, including the combinatorial
  techniques~\cite{WL68} and the group-theoretic techniques~\cite{Bab79,Luk82},
  are difficult to translate to \TTI.
  Indeed, it is not even clear how to adapt a basic combinatorial technique for
  \GI, namely individualizing a vertex \cite{BES80}, to the \TTI setting.
  It is also much harder to work with matrix groups over finite fields than to
  work with permutation groups.
  Also, techniques in computer algebra, including those that lead to the recent
  solution of isomorphism of quadratic polynomials with one secret~\cite{IQ18},
  seem not applicable to \TTI.

\item Finally, there is negative evidence that quantum algorithmic
  techniques involving the most successful quantum Fourier sampling
  may not be able to solve \GI and code
  equivalence~\cite{HMRRS10,DMR15}.
  It is expected that the same 
  argument
  holds with respect to \TTI as well. Loosely speaking, this is because the group 
  underlying
  \TTI is a direct product of general linear groups, which also has
  irreducible representations of high dimensions.

\end{enumerate}

\paragraph{A new assumption: pseudorandom group actions.}
Inspired by the Decisional Diffie-Hellman assumption, which enables
versatile cryptographic constructions, we propose the notion of
\emph{pseudorandom group actions}, or \PRA in short.

Roughly speaking, we call a group action $\alpha: G \times S \to S$
\emph{pseudorandom}, if any quantum polynomial-time algorithm
$\attack$ cannot distinguish the following two distributions except
with negligible probability: $(s, t)$ where $s, t\in_R S$, and the
other distribution $(s, \alpha(g, s))$, where $s\in_R S$ and
$g\in_R G$. A precise definition can be found in
Section~\ref{sec:pra}.

Note that if a group action is transitive, then the pseudorandom
distribution trivially coincides with the random distribution. Unless
otherwise stated, we will consider \emph{intransitive} group actions
when working with pseudorandom group actions. In fact, we can assume
that $(s, t)$ from the random distribution are in different orbits
with high probability, while $(s, t)$ from the pseudorandom
distribution are always in the same orbit.

Also note that \PRA is a stronger assumption than \OWA.  To break
\PRA, it is enough to solve the isomorphism testing problem \emph{on
  average} in a relaxed sense, i.e., on $1/\poly(n)$ fraction of the
input instances instead of all but $1/\poly(n)$ fraction, where $n$ is
the input size.

The Decisional Diffie-Hellman (DDH) assumption \cite{DH76,Bon98} can
be seen as the \PRA initiated with a certain group action; see
Observation~\ref{obs:classical_ddh}.  However, DDH is broken on a
quantum computer. We resort again to \GLAT as a quantum-safe candidate
of \PRA.  We investigate the hardness of breaking \PRA from various
perspectives and provide further justification for using the general
linear action on $3$-tensors as a candidate for \PRA.

\begin{enumerate}
\item Easy instances on $3$-tensors seem scarce, and average-case
  algorithms do not speed up dramatically. Indeed, the best known
  average-case algorithm, while improves over worst-case somewhat due
  to the birthday paradox, still inherently enumerate all vectors in
  $\F_q^n$ and hence take exponential time~\cite{BFV13,LQ17}.
\item For $3$-tensors, there have not been non-trivial and
  easy-to-compute isomorphism invariants, i.e., those properties that
  are preserved under the action. For example, a natural isomorphism
  invariant, the tensor rank, is well-known to be
  NP-hard~\cite{Has90}. Later work suggests that ``most
  tensor problems are NP-hard''~\cite{HL13}.
\item We propose and analyze several attack strategies from group theory and
  geometry. While effective on some
    non-trivial actions, these attacks do not work for the general
    linear action on $3$-tensors.
    For instance, we notice that breaking our \PRA from \GLAT reduces
    to the orbit closure intersection problem, which has received
    considerable attention in optimization, and geometric complexity
    theory. Despite recent
    advances~\cite{GCT5,BGO+18,BCG+18,AGL+18,DM18,IQS17}, any
    improvement towards a more effective attack would be a
    breakthrough.
  \end{enumerate}

Recently, De Feo and Galbraith proposed an assumption in the setting
of supersingular isogeny-based cryptography, which can be viewed as
another instantiation of \PRA~\cite[Problem 4]{FG18}. This gives more
reason to further explore \PRA as a basic building block in
cryptography.

\subsection{Discussions}

In this paper, we further develop and extend the scope of group action based
cryptography by introducing the general linear group actions on tensors, \GLAT,
to the family of instantiations, by formulating the pseudorandom assumption
generalizing the well-known DDH assumption, and by proving the quantum-security
of various cryptographic primitives such as signatures and pseudorandom
functions in this framework.

There are two key features of \GLAT that are worth mentioning
explicitly.  First, the general linear action is
\emph{non-commutative} simply because the general linear group is
non-abelian.  This is, on the one hand, an attractive property that
enabled us to argue the quantum hardness and the infeasibility of
quantum Fourier sampling type of attacks.  On the other hand, however,
this also makes it challenging to extend many attractive properties of
discrete-logarithm and decisional Diffie-Hellman to the more general
framework of group action cryptography.  For example, while it is
known that the worst-case DDH assumption reduces to the average-case
DDH assumption~\cite{NR04}, the proof relies critically on
commutativity.  Second, the general linear action is \emph{linear} and
the space of tensors form a linear space.  Linearity seems to be
responsible for the supergroup attacks on the $\PRA(d)$ assumption
discussed in Subsection~\ref{subsec:attacks}. It also introduces the
difficulty for building more efficient PRF constructions analogous to
the DDH-based ones proposed in~\cite{NR04}.

Our work leaves a host of basic problems about group action based
cryptography as future work.  First, we have been focusing on the
general linear group actions on tensors and have not discussed too
much about the other possible group actions on the tensor space.  A
mixture of different types of group actions on different indices of
the tensor may have the advantage of obtaining a more efficient
construction or other appealing structural properties.  It will be
interesting to understand better how the hardness of the group actions
on tensors relate to each other and what are the good choices of group
actions for practicability considerations.  Second, it is appealing to
recover the average-case to worst-case reduction, at least to some
extent, for the general group actions framework.  Finally, it is an
important open problem to build quantum-secure public-key encryption
schemes based on hard problems about \GLAT or its close variations.

\section{The group action framework}

\label{sec:framework}

In this section, we formally describe the framework for group action based
cryptography to be used in this paper. While such general frameworks were already 
proposed by Brassard and Yung~\cite{BY90} and Couveignes~\cite{Cou06}, there are
delicate differences in several places, so we will have to still go through the 
details. This section should be considered as largely expository. 

\subsection{Group actions and notations}

\label{subsec:group_prel}

Let us first formally define group actions.
Let $G$ be a group, $S$ be a set, and $\id$ the identity element of $G$.
A \emph{(left) group action} of $G$ on $S$ is a function $\alpha: G\times S\to
S$ satisfying the following: (1) $\forall s\in S$, $\alpha(\id, s)=s$; (2)
$\forall g, h\in G$, $s\in S$, $\alpha(gh, s)=\alpha(g, \alpha(h, s))$.
The group operation is denoted by $\circ$, e.g.
for $g, h\in G$, we can write their product as $g\circ h$.
We shall use $\cdot$ to denote the left action, e.g.
$g\cdot s=\alpha(g, s)$.
We may also consider the right group action $\beta:S\times G\to S$, and use the
exponent notation for right actions, e.g.
$s^g=\beta(s, g)$.

Later, we will use a special symbol $\bot\not\in G\cup S$ to indicate that a bit
string does not correspond to an encoding of an element in $G$ or $S$.
We extend the operators $\circ$ and $\cdot$ to $\circ:G\cup\{\bot\}\times
G\cup\{\bot\}\to G\cup\{\bot\}$ and $\cdot:G\cup\{\bot\}\times S\cup\{\bot\}\to
S\cup\{\bot\}$, by letting $g\circ h=\bot$ whenever $g=\bot$ or $h=\bot$, and
$g\cdot s=\bot$ whenever $g=\bot$ or $s=\bot$.

Let $\alpha:G\times S\to S$ be a group action.
For $s\in S$, the \emph{orbit} of $s$ is $O_s=\{t\in S : \exists g\in G, g\cdot
s=t\}$.
The action $\alpha$ partitions $S$ into a disjoint union of orbits.
If there is only one orbit, then $\alpha$ is called transitive.
Restricting $\alpha$ to any orbit $O$ gives a transitive action.
In this case, take any $s\in O$, and let $\mathrm{Stab}(s, G)=\{g\in G : g\cdot
s=s\}$ be the stabilizer group of $s$ in $G$.
For any $t\in O$, those group elements sending $s$ to $t$ form a coset of
$\mathrm{Stab}(s, G)$.
We then obtain the following easy observation.
\begin{observation}\label{obs:uniform}
Let $\alpha:G\times S\to S$, $s$, and $O$ be as above. The following two
distributions are the same: the uniform distribution of $t\in O$, and the
distribution of $g\cdot s$ where $g$ is sampled from a uniform distribution over
$G$.
\end{observation}

\subsection{The computational model}

\label{subsec:model}

For computational purposes, we need to model the algorithmic representations of
groups and sets, as well as basic operations like group multiplication, group
inverse, and group actions.
We review the group action framework as proposed in Brassard and
Yung~\cite{BY90}.
A variant of this framework, with a focus on restricting to abelian
(commutative) groups, was studied by Couveignes~\cite{Cou06}.
However, it seems to us that some subtleties are present, so we will propose
another version, and compare it with those by Brassard and Yung, and Couveignes,
later.

\begin{itemize}
\item Let $n$ be a parameter which controls the instance size. Therefore,
polynomial time or length in the following are with respect to $n$.

\item (Representing group and set elements.) Let $G$ be a group, and $S$ be a set.
Let $\alpha: G\times S\to S$ be a group action.
Group elements
and set elements are represented by bit strings $\{0, 1\}^*$. There are
polynomials $p(n)$ and $q(n)$,
such that we only work with group elements representable by $\{0, 1\}^{p(n)}$ and
set elements representable by $\{0, 1\}^{q(n)}$. There are
functions $F_G$ and $F_S$ from $\{0, 1\}^*$ to $G\cup \{\perp\}$ and $S\cup
\{\perp\}$, respectively. Here, $\perp$ is a special symbol, designating that the
bit string does not represent a group or set element. $F_G$ and $F_S$ should be
thought of as assigning bit strings to group elements.

\item (Unique encoding of group and set elements.)
For any $g\in G$, there exists a unique
$b\in \{0, 1\}^*$ such that $F_G(b)=g$. In particular, there exists a unique bit
string, also denoted by $\id$, such that $F_G(\id)=\id$. Similarly, for any $s\in
S$, there exists a unique $b\in \{0, 1\}^*$ such that $F_S(b)=s$.
\item (Group operations.) There are polynomial-time computable functions
$\PROD:\{0, 1\}^*\times\{0, 1\}^*\to\{0, 1\}^*$ and $\INV:\{0,
1\}^*\to \{0, 1\}^*$, such that for $b, c\in \{0, 1\}^*$,
$F_G(\PROD(b, c))=F_G(b)\circ F_G(c)$, and $F_G(\INV(b))\circ F_G(b)=\id$.
\item (Group action.) There is a polynomial-time function $a:\{0, 1\}^*\times \{0,
1\}^*\to \{0, 1\}^*$, such that for $b\in \{0, 1\}^*$ and $c\in \{0, 1\}^*$,
satisfies
$F_S(a(b, c))=\alpha(F_G(b), F_S(c))$.

\item (Recognizing group and set elements.)
  There are polynomial-time computable functions $C_G$ and $C_S$, such that
  $C_G(b)=1$ iff $F_G(b)\neq\bot$, and $C_S(b)=1$ iff $F_S(b)\neq\bot$.

\item (Random sampling of group and set elements.)
  There are polynomial-time computable functions $R_G$ and $R_S$, such that
  $R_G$ uniformly samples a group element $g\in G$, represented by the unique
  $b\in\{0, 1\}^{p(n)}$ with $F_G(b)=g$, and $R_S$ uniformly samples a set
  element $s\in S$, represented by some $b\in \{0, 1\}^{q(n)}$ with $F_S(b)=s$.
\end{itemize}

\begin{remark}\label{rem:model}
 
  Some remarks are due for the above model.

  \begin{enumerate}
  \item The differences with Brassard and Yung are: (1) allowing infinite groups
    and sets; (2) adding random sampling of set elements.
    Note that in the case of infinite groups and sets, the parameters $p(n)$ and
    $q(n)$ are used to control the bit lengths for the descriptions of
    legitimate group and set elements.
    This allows us to incorporate e.g.
    the lattice isomorphism problem \cite{HR14} into this framework.
    In the rest of this article, however, we will mostly work with finite groups
    and sets, unless otherwise stated.
  \item The main reason to consider infinite groups is the uses of lattice
    isomorphism and equivalence of integral bilinear forms in the cryptographic
    setting.

  \item The key difference with Couveignes lies in Couveignes's focus on
    transitive abelian group actions with trivial stabilizers.
  \item It is possible to adapt the above framework to use the black-box group
    model by Babai and Szemer\'edi~\cite{BS84}, whose motivation was to deal
    with non-unique encodings of group elements (like quotient groups).
    For our purposes, it is more convenient and practical to assume that the
    group elements have unique encodings.
  \item Babai~\cite{Babai:1991aa} gives an efficient Monte Carlo algorithm for
    sampling a group element of a finite group in a very general setting which
    is applicable to most of our instantiations with finite groups.
  \end{enumerate}
\end{remark}

\subsection{The isomorphism problem and the one-way assumption}
\label{subsec:owa}

Now that we have defined group actions and a computational model, let
us examine the isomorphism problems associated with group actions.

\begin{definition}[The isomorphism problem]\label{def:orbit}
Let $\alpha:G\times S\to S$ be a group action. The isomorphism problem for 
$\alpha$ is
to decide, given $s, t\in S$, whether $s$ and $t$ lie in the same orbit
under $\alpha$. If they are, the search version of the isomorphism problem further 
asks
to compute some $g\in G$, such that $\alpha(g, s)=t$.
\end{definition}

If we assume that there is a distribution on $S$ and we require the algorithm to
succeed for $(s, t)$ where $s$ is sampled from this distribution and $t$ is
arbitrary, then this is the \emph{average-case} setting of the isomorphism
problem.
For example, the first average-case efficient algorithm for the graph
isomorphism problem was designed by Babai, Erd\H{o}s and Selkow in the
1970's~\cite{BES80}.

The hardness of the isomorphism problem provides us with the basic intuition
for its use in cryptography.
But for cryptographic uses, the \emph{promised} search version of the
isomorphism problem is more relevant, as already observed by Brassard and
Yung~\cite{BY90}.
That is, suppose we are given $s, t\in S$ with the promise that they
are in the same orbit, the problem asks to compute $g\in G$ such that
$g\cdot s=t$.
Making this more precise and suitable for cryptographic purposes, we
formulate the following problem.

\begin{definition}[The group-action inversion (\gainv) problem] Let $\pg$ be a
  group action family, such that for a security parameter $\lambda$,
  $\pg(\usecpar)$ consists of descriptions of a group $G$, a set $S$ with
  $\log(|G|)=\poly(\lambda)$, $\log(|S|)=\poly(\lambda)$, and an group action
  $\alpha: G\times S \to S$ that can be computed efficiently, which we denote as
  a whole as a public parameter $\params$.
  Generate random $s\gets S$ and $g\gets G$, and compute $t: = \alpha(g,s)$.
  The group-action inversion (\gainv) problem is to find $g$ given $(s,t)$.
\label{def:ga-inv}
\end{definition}

\begin{definition}[Group-action inversion game]
  The group-action inversion game is the following game between a challenger and
  an arbitrary adversary $\attack$:
  \begin{enumerate}
  \item The challenger and adversary $\attack$ agree on the public parameter
    $\params$ by choosing it to be $\pg(\usecpar)$ for some security parameter
    $\secpar$.
  \item Challenger samples $s\gets S$ and $g\gets G$ using $R_S$ and $R_G$,
    computes $t = g \cdot s$, and gives $(s, t)$ to $\attack$.
  \item The adversary $\attack$ produces some $g'$ and sends it to the
    challenger.
  \item We define the output of the game $\gainv_{\attack,\pg}(\usecpar) = 1$ if
    $g' \cdot s = t$, and say $\attack$ wins the game if $\gainv_{\attack,
      \pg}(\usecpar) = 1$.
  \end{enumerate}
\end{definition}

\begin{definition} We say that the group-action inversion (\gainv) problem is
  hard relative to $\pg$, if for any polynomial time quantum algorithm $\attack$,
  \begin{equation*}
    \Pr \bigl[ \gainv_{\attack,\pg}(\usecpar) \bigr] \leq \negl(\secpar) \, .
  \end{equation*}
\end{definition}

We propose our first cryptographic assumption in the following. It
generalizes the one in~\cite{BY90}.

\begin{assumption}[One-way group action (\OWA) assumption]
  \label{asn:owa}
  There exists a family $\pg$ relative to which the $\gainv$ problem is hard.
\end{assumption}

We informally call the group action family $\pg$ in Assumption~\ref{asn:owa} a
one-way group action.
Its name comes from the fact that, as already suggested in~\cite{BY90}, this
assumption immediately implies that we can treat $\Gamma_s: G \to S$ given by
$\Gamma_s(g)=\alpha(g, s)$ as a one-way function for a random $s$.
In fact, OWA assumption is equivalent to the assertion that the function
$\Gamma:G\times S\to S\times S$ given by $\Gamma(g, s)=(g\cdot s, s)$ is one-way
in the standard sense.

Note that the \OWA assumption comes with the promise that $s$ and $t$ are in the
same orbit.
The question is to compute a group element that sends $s$ to $t$.
Comparing with Definition~\ref{def:orbit}, we see that the \OWA assumption is
stronger than the assumption that the search version of the isomorphism problem is
hard for a group action, while incomparable with the decision version.
Still, most algorithms for the isomorphism problem we are aware of do solve the 
search
version.

\begin{remark}\label{rem:compare_BY}
Note that Assumption~\ref{asn:owa} has a slight difference with that of Brassard
and Yung as follows.
In~\cite{BY90}, Brassard and Yung asks for the \emph{existence} of some $s\in S$
as in Definition~\ref{def:ga-inv}, such that for a random $g\in G$, it is not
feasible to compute $g'$ that sends $s$ to $\alpha(g, s)$.
Here, we relax this condition, namely a \emph{random} $s\in S$ satisfies this
already.
One motivation for Brassard and Yung to fix $s$ was to take into account of
graph isomorphism, for which Brassard and Crep\'eau defined the notion of ``hard
graphs'' which could serve as this starting point~\cite{BC86}.
However, by Babai's algorithm~\cite{Bab16} we know that hard graphs could not
exist.
Here we use a stronger notion by allowing a random $s$, which we believe is a
reasonable requirement for some concrete group actions discussed in
Section~\ref{sec:owa_candidate}.
\end{remark}

A useful fact for the \gainv problem is that it is \emph{self-reducible} to
random instances within the orbit of the input pair.
For any given $s$, let $O_s$ be the orbit of $s$ under the group action
$\alpha$.
If there is an efficient algorithm $\attack$ that computes $g$ from $(t, t')$ where
$t' = \alpha(g,t)$ for at least $1/\poly(\secpar)$ fraction of the pairs $(t,t')
\in O_s \times O_s$, then the \gainv problem can be computed for any $(t,t') \in
O_s \times O_s$ with probability $1-e^{-\poly(\secpar)}$.
On input $(t,t')$, the algorithm samples random group elements $h,h'$ and calls
$\attack$ with $(\alpha(h,t),\alpha(h',t'))$.
If $\attack$ successfully returns $g$, the algorithm outputs $h^{-1}gh'$ and
otherwise repeats the procedure for polynomial number of times.

The one-way assumption leads to several basic cryptographic
applications as described in the literature. First, it gives a
identification scheme by adapting the zero-knowledge proof system for
graph isomorphism~\cite{GMW91}. Then via the celebrated Fiat-Shamir
transformation~\cite{FS86}, one also obtains a signature
scheme. Proving quantum security of these protocols, however, would
need more care. For completeness, we give detailed proofs in
Section~\ref{sec:quantum_owa}.

\section{General linear actions on tensors: the one-way group action assumption}

\label{sec:owa_candidate}

In this section, we propose the general linear actions on tensors,
i.e., the tensor isomorphism problem, as our choice of candidate for
the \OWA assumption. We first reflect on what would be needed for a
group action to be a good candidate. 

\subsection{Requirements for a group action to be one-way}

\label{subsec:owa_requirement}

Naturally, the hardness of the $\gainv$ problem for a specific group action
needs to be examined in the context of the following four types of algorithms.
\begin{itemize}
\item Practical algorithms: implemented algorithms with practical performance
evaluations but no theoretical guarantees;
\item Average-case algorithms: for some natural distribution over the input
instances, there is an algorithm that are efficient for most input instances from
this distribution with provable guarantees;
\item Worst-case algorithms: efficient algorithms with provable guarantees for all
input instances;
\item Quantum algorithms: average-case or worst-case efficient algorithms in the
quantum setting.
\end{itemize}
Here, efficient means sub-exponential, and most means $1-1/\poly(n)$
fraction.  It is important to keep in mind all possible attacks by
these four types of algorithms.  Past experience suggests that one
problem may look difficult from one viewpoint, but turns out to be
easy from another.

The graph isomorphism problem has long been thought to be a difficult problem
from the worst-case viewpoint.
Indeed, a quasipolynomial-time algorithm was only known very recently, thanks to
Babai's breakthrough~\cite{Bab16}.
However, it has long been known to be effectively solvable from the practical
viewpoint~\cite{McK80,MP14}.
This shows the importance of practical algorithms when justifying a
cryptographic assumption.

Patarin proposed to use polynomial map isomorphism problems in his instantiation
of the identification and signature schemes~\cite{Pat96}.
He also proposed the one-sided version of such problems, which has been studied
intensively~\cite{PGC98,GMS03,Per05,FP06,Kay11,BFFP11,MPG13,BFV13,PFM14,BFP15},
mostly from the viewpoint of practical cryptanalysis.
However, the problem of testing isomorphism of quadratic polynomials with 
one secret was recently shown to be solvable in randomized polynomial
time~\cite{IQ18}, using ideas including efficient algorithms for computing the
algebra structure, and the $*$-algebra structure underlying such problems.
Hence, the investigation of theoretical algorithms is also valuable.

Considering of quantum attacks is necessary for
security in the quantum era. Shor's algorithm, for example, invalidates the 
hardness assumption of the
discrete logarithm problems.

Guided by the difficulty met by the hidden subgroup approach on tackling graph
isomorphism~\cite{HMRRS10}, Moore, Russell, and Vazirani proposed the code
equivalence problem as a candidate for the one-way assumption~\cite{MRV07}.
However, this problem turns out to admit an effective practical algorithm by
Sendrier~\cite{Sen00}.

\subsubsection{One-way group action assumption and the hidden subgroup approach}

\label{subsec:ow_hsp}

From the post-quantum perspective, a general remark can be made on the \OWA
assumption and the hidden subgroup approach in quantum algorithm design.

Recall that the hidden subgroup approach is a natural generalization of Shor's
quantum algorithms for discrete logarithm and factoring \cite{Sho94}, and can 
accommodate
both lattice problems \cite{Reg04} and isomorphism testing problems 
\cite{HMRRS10}. The survey paper of Childs and van Dam \cite{CD10} contains a nice 
introduction to this approach.

A well-known approach to formulate \gainv as an HSP problem is the following 
\cite[Sec. VII.A]{CD10}.
Let $\alpha:G\times S\to S$ be a group action.
Given $s, t\in S$ with the promise that $t=g\cdot s$ for some $g\in G$, we want
to compute $g$.
To cast this problem as an HSP instance, we first formulate it as an
automorphism type problem.
Let $\tilde G=G\wr \S_2$, where $\S_2$ is the symmetric group on two elements, and 
$\wr$ denotes the wreath product.
The action $\alpha$ induces an action $\beta$ of $\tilde G$ on $S\times S$ as
follows.
Given $(g, h, i)\in \tilde G=G\wr \S_2$ where $g, h\in G, i\in \S_2$, if $i$ is
the identity, it sends $(s, t)\in S\times S$ to $(g\cdot s, h\cdot t)$; otherwise,
it sends $(s, t)$ to $(h\cdot t, g\cdot s)$.
Given $(s, t)\in S\times S$, we define a function $f_{(s, t)}:\tilde G\to
S\times S$, such that $f_{(s, t)}$ sends $(g, h, i)$ to $(g, h, i)\cdot (s, t)$,
defined as above.
It can be verified that $f_{(s, t)}$ hides the coset of the stabilizer group of
$(s, t)$ in $\tilde G$.
Since $s$ and $t$ lie in the same orbit, any generating set of the stabilizer
group of $(s, t)$ contains an element of the form $(g, h, i)$, where $i$ is not
the identity element in $\S_2$, $g\cdot s=t$, and $h\cdot t=s$.
In particular, $g$ is the element required to solve the \gainv problem.
In the above reduction to the HSP problem, the ambient group is $G\wr \S_2$
instead of the original $G$.
In some cases like the graph isomorphism problem, because of the polynomial-time 
reduction from
isomorphism testing to automorphism problem, we can retain the ambient group to
be $G$.
However, such a reduction is not known for \GLAT.

There has been notable progress on the HSP problems for various ambient groups,
but the dihedral groups and the symmetric groups have withstood the attacks so
far.
Indeed, one source of confidence on using lattice problems in post-quantum
cryptography lies in the lack of progress in tackling the hidden subgroup
problem for dihedral groups~\cite{Reg04}.
There is formal negative evidence for the applicability of this approach
for certain group actions where the groups have high-dimensional
representations, like $\S_n$ and $\GL(n, q)$ in the case of the graph
isomorphism problem~\cite{HMRRS10} and the permutation code equivalence
problem~\cite{DMR15}.
The general lesson is that current quantum algorithmic technologies seem
incapable of handling groups which have irreducible representations of high
dimensions.

As mentioned, the \OWA assumption has been discussed in post-quantum
cryptography with the instantiation of the permutation code equivalence problem
\cite{MRV07,DMR11,DMR11b,SS13,DMR15}.
Though this problem is not satisfying enough due to the existence of effective
practical algorithms~\cite{Sen00}, the following quoted from~\cite{MRV07} would be
applicable to our choice of candidate to the discussed below.
\begin{quote}
  \it
  The design of efficient cryptographic primitives resistant to quantum attack
  is a pressing practical problem whose solution can have an enormous impact on
  the practice of cryptography long before a quantum computer is physically
  realized.
  A program to create such primitives must necessarily rely on insights into the
  limits of quantum algorithms, and this paper explores consequences of the
  strongest such insights we have about the limits of quantum algorithms.
\end{quote}

\subsection{The tensor isomorphism problem and others}

We now formally define the tensor isomorphism problem and other isomorphism
testing problems. For this we need some notation and preparations.

\subsubsection{Notation and preliminaries}

We usually use $\F$ to denote a field.
The finite field with $q$ elements and the real number field are denoted by
$\F_q$ and $\real$, respectively.
The linear space of $m$ by $n$ matrices over $\F$ is denoted by
$\Matrix(m,n,\F)$, and $\Matrix(n, \F):=\Matrix(n, n, \F)$.
The identity matrix in $\Matrix(n, \F)$ is denoted by $I_n$.
For $A\in \Matrix(m, n, \F)$, $A^t$ denotes the transpose of $A$.
The group of $n$ by $n$ invertible matrices over $\F$ is denoted by $\GL(n,
\F)$.
We will also meet the notation $\GL(n, \integer)$, the group of $n$ by $n$
integral matrices with determinant $\pm 1$.
We use a slightly non-standard notation $\GL(m, n, \F)$ to
denote the set of rank $\min(m, n)$ matrices in $\Matrix(m, n, \F)$.
We use $\langle \cdot \rangle$ to denote the linear span; for example, given
$A_1, \dots, A_k\in \Matrix(m, n, \F)$, $\langle A_1,\dots, A_k\rangle$ is
a subspace of $\Matrix(m, n, \F)$.

We will meet some subgroups of $\GL(n, \F)$ as follows.
The symmetric group $\S_n$ on $n$ objects is embedded into $\GL(n, \F)$ as
permutation matrices.
The orthogonal group $\O(n, \F)$ consists of those invertible matrices $A$ such
that $A^tA=I_n$.
The special linear group $\SL(n, \F)$ consists of those invertible matrices $A$
such that $\det(A)=1$.
Finally, when $n=\ell^2$, there are subgroups of $\GL(\ell^2, \F)$ isomorphic to
$\GL(\ell, \F)\times \GL(\ell, \F)$.
This can be seen as follows.
First we fix an isomorphism of linear spaces $\phi: \F^{\ell^2}\to \Matrix(\ell,
\F)$\footnote{For example, we can let the first $\ell$ components be the 
first row,
the second $\ell$ components be the second row, etc..}.
Then $\Matrix(\ell, \F)$ admits an action by $\GL(\ell, \F)\times \GL(\ell, \F)$
by left and right multiplications, e.g.
$(A, D)\in \GL(\ell, \F)\times \GL(\ell, \F)$ sends $C\in \Matrix(\ell, \F)$ to
$ACD^t$.
Now use $\phi^{-1}$ and we get one subgroup of $\GL(\ell^2, \F)$ isomorphic to
$\GL(\ell, \F)\times \GL(\ell, \F)$.

\subsubsection{Definitions of several group actions}
\label{subsec:group_actions}

We first recall the concept of tensors and the group actions on the space of
$k$-tensors as introduced in Section~\ref{sec:intro}.

\begin{definition}[Tensor]
  \label{def:tensor}
  A $k$-tensor $T$ of local dimensions $d_1, d_2, \ldots, d_k$ over $\F$,
  written as
  \begin{equation*}
    T = (T_{i_1, i_2, \ldots, i_k}),
  \end{equation*}
  is a multidimensional array with $k$ indices and its components $T_{i_1, i_2,
    \ldots, i_k}$ chosen from $\F$ for all $i_j \in \{1, 2, \ldots, d_j\}$.
  The set of $k$-tensors of local dimensions $d_1, d_2, \ldots, d_k$ over $\F$
  is denoted as
  \begin{equation*}
  \Tensor(d_1, d_2, \ldots, d_k, \F).
  \end{equation*}
  The integer $k$ is called the order of tensor $T$.
\end{definition}

\begin{action}[The general linear group action on tensors]
  \label{act:tensor}
  Let $\F$ be a field, $k$, $d_1, d_2, \ldots, d_k$ be integers.
  \begin{itemize}
  \item Group $G$: $\prod_{j=1}^k \GL(d_j, \F)$.
  \item Set $S$: $\Tensor(d_1, d_2, \ldots, d_k, \F)$.
  \item Action $\alpha$: for a $k$-tensor $T \in S$, a member $M = (M^{(1)},
    M^{(2)}, \ldots, M^{(k)})$ of the group $G$,
    \begin{equation*}
      \alpha(M, T) = \biggl( \bigotimes_{j=1}^k M^{(j)}  \biggr) T =
      \sum_{l_1, l_2, \ldots, l_k} \biggl( \prod_{j=1}^k M^{(j)}_{i_j,l_j} \biggr)
      T_{l_1, l_2, \ldots, l_k}.
    \end{equation*}
  \end{itemize}
\end{action}

We refer to the general linear group action on tensors in
Action~\ref{act:tensor} as \GLAT.
In the following, let us formally define several problems which have been
referred to frequently in the above discussions.

As already observed by Brassard and Yung \cite{BY90}, the discrete logarithm 
problem can be 
formulated using the language of group actions. More specifically, we have:
\begin{action}[Discrete Logarithm in Cyclic Groups of Prime
Orders]\label{act:discrete_log} Let $p$ be a prime, $\integer_p$ the integer.
\begin{itemize}
\item Group $G$: $\integer_p^{*}$, the multiplicative group of units in 
$\integer_p$.
\item Set $S$: $C_p\setminus \{\id\}$, where $C_p$ is a cyclic group
  of order $p$ and $\id$ is the identity element. 
\item Action $\alpha$: for $a\in \integer_p^{*}$, and $s\in S$,
$\alpha(a, s)=s^a$.
\end{itemize}
\end{action}
Note that in the above, we refrained from giving a specific realization of the
cyclic group $C_p$ for the sake of clarify; the reader may refer to Boneh's
excellent survey \cite{Bon98} for concrete proposals that can support the 
security of the Decisional Diffie-Hellman assumption.

The linear code permutation equivalence (LCPE) problem asks to decide whether
two linear codes (i.e.
linear subspaces) are the same up to a permutation of the coordinates.
It has been studied in the coding theory community since the
1990's~\cite{PR97,Sen00}.

\begin{action}[Group action for Linear Code Permutation Equivalence problem (LCPE)]
  \label{act:code}
  Let $m, d$ be integers, $m\leq d$, and let $\F$ be a field.
  \begin{itemize}
  \item Group $G$: $\GL(m, \F) \times \S_d$.
  \item Set $S$: $\GL(m, d, \F)$.
  \item Action $\alpha$: for $A \in S$, $M = (N,P) \in G$, $\alpha(M, A) = N A P^t $.
  \end{itemize}
\end{action}

The connection with coding theory is that $A$ can be viewed as the generating
matrix of a linear code (a subspace of $\F_q^n$), and $N$ is the change of basis
matrix taking care of different choices of bases.
Then, $P$, as a permutation matrix, does not change the weight of a codeword---
that is a vector in $\F^n$.
(There are other operations that preserve weights~\cite{SS13}, but we restrict
to consider this setting for simplicity.)
The \gainv problem for this group action is called the linear code permutation
equivalence (LCPE) problem, which has been studied in the coding theory
community since the 1980's~\cite{Leo82}, and we can dodge the only successful
attack~\cite{Sen00} by restricting to self-dual codes.

The following group action induces a problem called the polynomial isomorphism
problems proposed by Patarin~\cite{Pat96}, and has been studied in the
multivariate cryptography community since then.

\begin{action}[Group action for the Isomorphism of Quadratic Polynomials with
  two Secrets problem (IQP2S)]
  Let $m,d$ be integers and $\F$ a finite field.
  \label{act:poly}
  \begin{itemize}
  \item Group $G$: $\GL(d, \F) \times \GL(m, \F)$.
  \item Set $S$: The set of tuples of homogeneous polynomials $(f_1, f_2,
    \ldots, f_m)$ for $f_i \in \F[x_1, x_2, \ldots, x_d]$ the polynomial ring of
    $d$ variables over $\F$.
  \item Action $\alpha$: for $f = (f_1, f_2, \ldots, f_m) \in S$, $M = (C, D)
    \in G$, $C'=C^{-1}$, define $\alpha(M, f) = (g_1, g_2, \ldots, g_m)$ by 
    $g_i(x_1,  x_2, \ldots,
        x_d)=\sum_{j=1}^m D_{i,j} f_i(x_1', \dots, x_d')$, where 
        $x_i'=\sum_{j=1}^d C'_{i,j}x_j$.
  \end{itemize}
\end{action}

The \gainv problem for this group action is essentially the isomorphism of
quadratic polynomials with two secrets (IQP2S) assumption.
The algebraic interpretation here is that the tuple of polynomials $(f_1, \dots,
f_n)$ is viewed as a polynomial map from $\F^n$ to $\F^m$, by sending $(a_1,
\dots, a_n)$ to $(f_1(a_1, \dots, a_n), \dots, f_m(a_1, \dots, a_n))$.
The changes of bases by $C$ and $D$ then are naturally interpreted as saying
that the two polynomial maps are essentially the same.

Finally, the \gainv problem for the following group action originates from
computational group theory, and is basically equivalent to a bottleneck case of
the group isomorphism problem (i.e.~$p$-groups of class $2$ and exponent
$p$)~\cite{OBr94,LQ17}.

\begin{action}[Group action for alternating matrix space isometry (AMSI)]
\label{act:amsi}
  Let $d, m$ be integers and $\F$ be a finite field.
  \begin{itemize}
  \item Group $G$: $\GL(m, \F)$.
  \item Set $S$: the set of all linear spans $\cA$ of $d$ alternating\footnote{An 
  $m\times m$ matrix $A$ is alternating if for any $v\in \F^n$, $v^tAv=0$.} 
  matrices
    $A_i$ of size $m\times m$.
  \item Action $\alpha$: for $\cA = \langle A_1, A_2, \ldots, A_d \rangle \in
    S$, $C \in G$, $\alpha(C, \cA) = \langle B_1, B_2, \ldots, B_d \rangle$
    where $B_i = C A_i C^t$ for all $i=1, 2, \ldots, d$.
  \end{itemize}
\end{action}

\subsection{General linear actions on tensors as one-way action candidates}

\subsubsection{The central position of $3$-tensor isomorphism}

As mentioned, the four problems, linear code permutation equivalence
(LCPE), isomorphism of polynomials with two secrets (IQP2S), and
alternating matrix space isometry (AMSI), have been studied in coding
theory, multivariate cryptography, and computational group theory,
respectively, for decades.  Only recently we begin to see connections
among these problems which go through the \TTI problem thanks to the
work of Futorny, Grochow, and Sergeichuk \cite{FGS19}. We spell out
this explicitly.

\begin{observation}[\cite{FGS19,Gro12}]\label{obs:reductions}
IQP2S, AMSI, \GI, and LCPE reduce to \TTI.
\end{observation}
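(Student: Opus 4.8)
The plan is to pass through a single unifying object: a $3$-tensor in $\Tensor(d_1,d_2,d_3,\F)$ is, after slicing along the third index, a tuple of $d_3$ matrices $A_1,\dots,A_{d_3}\in\Matrix(d_1,d_2,\F)$, and the action of Action~\ref{act:tensor} becomes $A_i\mapsto M^{(1)}A_i(M^{(2)})^t$ together with arbitrary linear recombination of the $A_i$ by $M^{(3)}$. Hence \TTI is exactly the problem of deciding, given two matrix spaces $\cA,\cB\subseteq\Matrix(d_1,d_2,\F)$ (presented by spanning tuples), whether $P\cA Q=\cB$ for some $P\in\GL(d_1,\F)$ and $Q\in\GL(d_2,\F)$ --- \emph{matrix-space equivalence}, with the $\GL(d_3,\F)$ factor absorbed as a change of basis inside the span. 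Each of IQP2S, AMSI, \GI, and LCPE is a matrix-space problem for a \emph{restricted} group, so in every case I would (i) rewrite the source problem as a matrix-space problem and (ii) build a gadget tensor whose free left--right equivalence simulates the restricted action.

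For IQP2S and AMSI the restriction is \emph{congruence}: a quadratic-form tuple has Gram matrices $A_i$ (symmetric, in odd characteristic) transformed by $A_i\mapsto C^tA_iC$, while an alternating space is transformed by $\cA\mapsto C\cA C^t$ as in Action~\ref{act:amsi}; in both the left and right transformations are tied by transpose and arise from a single $C$. The core of the reduction is therefore to force the independent $P,Q$ of \TTI to satisfy $Q=(P^{-1})^t$. I would do this with the FGS19-style gadget: augment the space with auxiliary slices --- a ``frame'' built from the identity together with transpose-coupling blocks --- so that any equivalence of the augmented tensors must preserve the (anti)symmetry pattern and hence act by congruence on the original block, while conversely every congruence extends to an equivalence of the augmented tensors. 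Verifying that the gadget introduces \emph{no spurious} equivalences is the delicate part.

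For \GI and LCPE the restriction is instead to \emph{permutation} (or monomial) matrices. Graph isomorphism is congruence of the adjacency matrix by a permutation, so I would first reduce \GI to AMSI by encoding a graph with edge set $E$ as an alternating space $\langle e_i\wedge e_j:\{i,j\}\in E\rangle$ padded with diagonal/degree gadgets that kill every non-permutation isometry (alternatively, invoke the classical reduction $\GI\le\text{LCPE}$ of Petrank--Roth~\cite{PR97}), and then apply the AMSI reduction above. For LCPE, where the action $A\mapsto NAP^t$ already has a free left factor $N\in\GL(m,\F)$, I would only need to pin the right factor to a permutation; appending coordinate-frame slices that are stabilized exactly by monomial matrices reduces LCPE to matrix-space equivalence, i.e.\ to \TTI.

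The main obstacle throughout is the gadget design and its soundness: it is easy to write down slices that \emph{permit} the intended restricted transformations, but one must prove that the augmented tensors admit \emph{no other} equivalences, so that isomorphism of the \TTI instances holds if and only if the original objects are equivalent. This ``rigidity'' --- ruling out transformations that mix the padding with the data, or that realize non-congruence or non-permutation maps --- is precisely the technical content supplied by~\cite{FGS19,Gro12}, and is where I expect essentially all of the work to lie.
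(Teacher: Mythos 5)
Your treatment of IQP2S and AMSI matches the paper's: slice into matrix tuples, observe that the obstruction is the transpose-coupling of the left and right factors (a ``linked action'' in the terminology of \cite{FGS19}), and invoke the FGS19 gadget function $r$ that converts linked equivalence into free $\GL\times\GL\times\GL$ equivalence; the paper likewise handles IQP2S by passing through symmetric Gram matrices to an AMSI-like form. Where you genuinely diverge is LCPE (and hence \GI): the paper does \emph{not} attempt a direct gadget that pins the right-hand factor to a permutation. Instead it chains $\GI \le \text{LCPE}$ via \cite{PR97}, then $\text{LCPE} \le$ matrix Lie algebra conjugacy via \cite{Gro12}, and only then applies the linked-action machinery of \cite{FGS19} to the conjugacy action. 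Your alternative --- appending ``coordinate-frame slices stabilized exactly by monomial matrices'' --- is exactly the step that is not known to be supplied off-the-shelf by \cite{FGS19}, and it glosses over the classical monomial-versus-permutation distinction: a gadget whose $\GL(d,\F)$-stabilizer is the monomial group does not yet give \emph{permutation} equivalence, and closing that gap is precisely what the detour through Lie algebra conjugacy buys. So your architecture is right and your AMSI/IQP2S route is the paper's, but for LCPE you should either supply the rigidity proof for your frame gadget (including the monomial issue) or fall back on the Grochow reduction as the paper does.
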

\begin{proof}
Note that the set underlying Group Action~\ref{act:amsi} consists of $d$-tuples of 
$m\times m$ alternating matrices. We can write such a tuple $(A_1, \dots, A_d)$ as 
a 3-tensor $A$ of dimension $m\times m\times d$, such that 
$A_{i,j,k}=(A_k)_{i,j}$. Then AMSI asks to test whether two such 3-tensors are in 
the same 
orbit under the action of $(M, N)\in \GL(m, \F)\times \GL(d, \F)$ by sending a 
3-tensor $A$ 
to the result of applying $(M, M, N)$ to $A$ as in the definition of \GLAT.

Such an action belongs to the class of actions on 3-tensors considered in 
\cite{FGS19} under 
the name \emph{linked actions}. 
This work constructs a function $r$ from 3-tensors to 3-tensors, 
such that $A$ and $B$ are in the same orbit under $\GL(m, \F)\times \GL(d, \F)$ if 
and only if $r(A)$ and $r(B)$ are in the same orbit under $\GL(m, \F)\times\GL(m, 
\F)\times\GL(d, \F)$. This function $r$ can be computed efficiently
\cite[Remark 1.1]{FGS19}. 

This explains the reduction of the isomorphism problem for Group 
Action~\ref{act:amsi} to the 3-tensor isomorphism problem. For Group 
Action~\ref{act:poly}, by using the classical correspondence between homogeneous 
quadratic polynomials and symmetric matrices, we can 
cast it in a form similar to 
Group Action~\ref{act:amsi}, and then apply the above reasoning using again 
\cite{FGS19}. 

Finally, to reduce the graph isomorphism problem (\GI) and the linear code 
permutation
equivalent problem (LCPE) to the 3-tensor isomorphism problem, we only need to 
take care of LCPE as \GI reduces to LCPE \cite{PR97}. To reduce LCPE to \TTI, we 
can reduce it to the matrix Lie algebra conjugacy problem by \cite{Gro12}, which 
reduces to \TTI by \cite{FGS19} along the linked action argument,  though this 
time linked in a different way.
\end{proof}

This put \TTI at a central position of these difficult
isomorphism testing problems arising from multivariate cryptography, computational 
group theory, and coding theory.
In particular, from the worst-case analysis viewpoint, \TTI is the hardest
problem among all these.
This also allows us to draw experiences from previous research in various
research communities to understand \TTI.

\subsubsection{Current status of the tensor isomorphism problem and its one-way
  action assumption}

\label{subsec:current_status}

We now explain the current status of the tensor isomorphism problem to support
it as a strong candidate for the \OWA assumption.
Because of the connections with isomorphism of polynomials with two secrets
(IQP2S) and alternating matrix space isometry (AMSI), we shall also draw results
and experiences from the multivariate cryptography and the computational group
theory communities.

For convenience, we shall restrict to finite fields $\F_q$, though other fields
are also interesting.
That is, we consider the action of $\GL(\ell, \F_q)\times \GL(n, \F_q) \times
\GL(m, \F_q)$ on $T \in \Tensor(\ell, n, m, \F_q)$.
Without loss of generality, we assume $\ell \geq n \geq m$.
The reader may well think of the case when $\ell = n = m$, which seems to be the
most difficult case in general.
Correspondingly, we will assume that the instances for IQP2S are $m$-tuples of
homogeneous quadratic polynomials in $n$ variables over $\F_q$, and the
instances for AMSI are $m$-tuples of alternating matrices of size $n\times n$
over $\F_q$.

To start, we note that \TTI over finite fields belongs to $\NP\cap\coAM$,
following the same $\coAM$-protocol for graph isomorphism. 

For the worst-case time complexity, it can be solved in time $q^{m^2}\cdot
\poly(\ell, m, n, \log q)$, by enumerating $\GL(m, q)$, and then solving an
instance of the matrix tuple equivalence problem, which asks to decide whether
two matrix tuples are the same under the left-right multiplications of invertible
matrices.
This problem can be solved in deterministic polynomial time by reducing
\cite{IQ18} to the module isomorphism problem, which in turn admits a
deterministic polynomial-time solution~\cite{CIK97,BL08,IKS10}.
It is possible to reduce the complexity to $q^{c m^2}\cdot \poly(\ell, m, n,
\log q)$ for some constant $0<c<1$, by using some dynamic programming technique
as in~\cite{LQ17}.
But in general, the worst-case complexity could not go beyond this at present,
which matches the experiences of IQP2S and AMSI as well; see~\cite{IQ18}.

For the average-case time complexity, it can be solved in time $q^{O(m)}\cdot
\poly(\ell, n)$, by adapting the average-case algorithm for AMSI in~\cite{LQ17}.
This also matches the algorithm for IQP2S which has an average-case running time
of $q^{O(n)}$~\cite{BFV13}.

For practical algorithms, we draw experiences from the computational group
theory community and the multivariate cryptography community.
In the computational group theory community, the current status of the art is
that one can hope to handle $10$-tuples of alternating matrices of size
$10\times 10$ over $\F_{13}$, but absolutely not, for $3$-tensors of local
dimension say $100$, even though in this case the input can still be stored in
only a few megabytes.\footnote{We thank James B.
  Wilson, who maintains a suite of algorithms for $p$-group isomorphism testing,
  for communicating this insight to us from his hands-on experience.
  We of course maintain responsibility for any possible misunderstanding, or
  lack of knowledge regarding the performance of other implemented algorithms.}
In the multivariate cryptography community, the Gr\"obner basis
technique~\cite{FP06} and certain combinatorial technique~\cite{BFV13} have been
studied to tackle IQF2S problem.
However, these techniques are not effective enough to break
it~\cite{BFV13}\footnote{In particular, as pointed out in \cite{BFV13}, one
  needs to be careful about certain claims and conjectures made in some
  literature on this research line.}.

For quantum algorithms, \TTI seems difficult for the hidden subgroup approach,
due to the reasons presented in Section~\ref{subsec:ow_hsp}.

Finally, let us also elaborate on the prospects of using those techniques for 
graph isomorphism~\cite{Bab16} and for isomorphism of quadratic polynomials with
one secret~\cite{IQ18} to tackle \TTI. In general, the difficulties of
applying these techniques seem inherent. 

We first check out the graph isomorphism side.
Recall that most algorithms for graph isomorphism, including Babai's
\cite{Bab16}, are built on two families of techniques: group-theoretic, and
combinatorial. To use the group-theoretic techniques, we need to work with 
matrix groups over finite fields instead of permutation groups. Algorithms for 
matrix groups over finite fields are in general far harder than those for 
permutation groups. For example, the basic membership problem is well-known to be 
solvable by Sims's algorithm \cite{Sim78}, while for matrix groups over finite 
fields of odd order, this was only recently shown to be efficiently solvable with 
a number-theoretic oracle and the algorithm is much more involved \cite{BBS09}. To 
use the combinatorial techniques, we need to work with linear or multilinear 
structures instead of combinatorial structures. This shift poses severe 
limitations on the use of most combinatorial techniques, like individualizing a
vertex. For example, it is quite expensive to enumerate all vectors in a vector 
space over a finite field, while this is legitimate to go over all elements in a 
set. 

We then check out the isomorphism of quadratic polynomials with one secret side. 
The techniques for settling this problem as in~\cite{IQ18} are based on those
developed for the module isomorphism problem~\cite{CIK97,BL08,IKS10}, involutive
algebras \cite{Wil09}, and computing algebra structures~\cite{FR85}.
The starting point of that algorithm solves an easier problem, namely testing
whether two matrix tuples are equivalent under the left-right multiplications.
That problem is essentially linear, so the techniques for the module isomorphism
problem can be used.
After that we need to utilize the involutive algebra structure~\cite{Wil09}
based on \cite{FR85}.
However, for \TTI, there is no such easier linear problem to start with, so it
is not clear how those techniques can be applied.

To summarize, the $3$-tensor isomorphism problem is difficult from all the four
types of algorithms mentioned in Section~\ref{subsec:owa_requirement}.
Furthermore, the techniques in the recent breakthrough on graph
isomorphism~\cite{Bab16}, and the solution of the isomorphism of quadratic
polynomials with one secret \cite{IQ18}, seem not applicable to this problem.
All these together support this problem as a strong candidate for the one-way
assumption.

\subsubsection{Choices of the parameters}

Having reviewed the current status of the tensor isomorphism problem, we lay out 
some principles of choosing the parameters for the security, namely the order 
$k$, the dimensions $d_i$, and the underlying field $\F$. 

Let us first explain why we focus on $k=3$, namely 3-tensors. Of course, $k$ needs 
to be $\geq 3$ as most problems about 2-tensors, i.e. matrices, are easy. We then 
note that there is certain evidence to support the possibility that the $k$-tensor 
isomorphism problem reduces to the 3-tensor isomorphism problem. That is, over 
certain fields, by \cite[Theorem 5]{AS05} and \cite{FGS19}, the degree-$k$ 
homogeneous form equivalence problem reduces to the 
the 3-tensor isomorphism problem in polynomial time. The former problem can be 
cast as an isomorphism problem for 
symmetric\footnote{A 
tensor $A=(A_{i_1, \dots, i_k})$ is symmetric if for any permutation 
$\sigma\in\S_k$, and any index $(i_1, \dots, i_k)$, $A_{i_1, \dots, 
i_k}=A_{i_{\sigma(1)}, \dots, i_{\sigma(k)}}$.} 
$k$-tensors under a certain action of $\GL$.
From the practical viewpoint though, it will be interesting to investigate into
the tradeoff between the local dimensions $d_i$ and $k$.

After fixing $k=3$, it is suggested to set $d_1=d_2=d_3$. This is because of the 
argument when examining the worst-case time complexity in the above subsection. 

Then for the underlying finite field $\F_q$, the intuition is that setting $q$ to 
be a large prime would be more secure. Note that we can still store an 
exponentially large prime using polynomially-many bits. 
This is because, if $q$ is small, then the ``generic'' behaviors as ensured by
the Lang--Weil type theorems \cite{LW54} may not be that generic. So some 
non-trivial 
properties may arise which then help with isomorphism testing. This is especially 
important for the pseudorandom assumption to be discussed Section~\ref{sec:pra}. 
We then examine 
whether we want to set $q$ to be a large prime, or a large field with a small 
characteristic. The former one is preferred, because the current techniques in 
computer algebra and computational group theory, cf. \cite{IQ18} and \cite{BBS09}, 
can usually 
work efficiently with large fields of small characteristics. 

However, let us emphasize that even setting $q$ to be a constant, we do not have
any concrete evidence for breaking \GLAT as a one-way group action candidate. That 
is, the 
above discussion on the field size issue is rather hypothetical and conservative.

\section{The pseudorandom action assumption}

\label{sec:pra}

In this section, we introduce the new security assumption for group actions,
namely pseudorandom group actions, which generalises the Decisional
Diffie-Hellman assumption.
In Section~\ref{sec:pra_candidate}, we shall study the prospect of using the
general linear action on tensors as a candidate for this assumption.
Then in Section~\ref{sec:quantum_pra}, we present the cryptographic uses of this
assumption including signatures and pseudorandom functions.

\begin{definition}
  \label{def:gapr}
  Let $\pg$ be a group family as specified before.
  Choose public parameters $\params = (G,S,\alpha)$ to be $\pg(\usecpar)$.
  Sample $s\gets S$ and $g\gets G$.
  The group action pseudorandomness (\gapr) problem is that given $(s, t)$,
  where $t = \alpha(g, s)$ or $t\gets S$, decide which case $t$ is sampled from.
\end{definition}

\begin{definition}[Pseudorandom group action game]
  The pseudorandom group action game is the following game between a challenger
  and an adversary $\attack$:
  \begin{itemize}
  \item The challenger and the adversary $\attack$ agree on the public
    parameters $\params = (G, S, \alpha)$ by choosing it to be $\pg(\usecpar)$
    for some security parameter $\secpar$.
  \item Challenger samples random bit $b\in \bit$, $s\gets S$, $g\gets G$, and
    chooses $t\gets S$ if $b=0$ and $t = g \cdot s$ if $b=1$.
  \item Give $(s, t)$ to $\attack$ who produces a bit $a\in \bit$.
  \item We define the output of the game $\gapr_{\attack,\pg}(\usecpar) = 1$ and
    say $\attack$ wins the game if $a=b$.
  \end{itemize}
\end{definition}

\begin{definition} We say that the group-action pseudorandomness (\gapr) problem
  is hard relative to $\pg$, if for any polynomial-time quantum algorithm
  $\attack$,
  \begin{equation*}
    \Pr [ \gapr_{\attack,\pg}(\usecpar) = 1 ] = \negl(\lambda).
  \end{equation*}
\end{definition}

Some remarks on this definition are due here.

\paragraph{For transitive and almost transitive actions.}
In the case of transitive group actions, as an easy corollary of
Observation~\ref{obs:uniform}, we have the following.
\begin{observation}
\gapr~problem is hard, if the group action $\alpha$ is transitive.
\end{observation}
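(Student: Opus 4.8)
The plan is to show that when $\alpha$ is transitive, the two distributions the adversary must distinguish are in fact \emph{identical}, so no algorithm—quantum or classical, bounded or unbounded—can achieve advantage beyond guessing. First I would unpack what the challenger produces in each branch of the \gapr~game. In the branch $b=0$, the pair is $(s,t)$ with $s\gets S$ and $t\gets S$ sampled independently and uniformly. In the branch $b=1$, the pair is $(s, g\cdot s)$ with $s\gets S$ uniform and $g\gets G$ uniform. The claim is that the marginal joint distribution of $(s,t)$ is the same in both cases.

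The key step is to condition on the first coordinate $s$ and compare the conditional law of $t$. In the $b=1$ branch, conditioned on any fixed $s$, the element $t=g\cdot s$ is exactly the distribution of $g\cdot s$ for uniform $g\in G$. Since $\alpha$ is transitive, the single orbit $O_s$ equals all of $S$. By Observation~\ref{obs:uniform}, applied to this orbit $O=S$, the distribution of $g\cdot s$ for uniform $g$ coincides with the uniform distribution on $O_s = S$. Hence conditioned on $s$, the law of $t$ in the $b=1$ branch is uniform on $S$. In the $b=0$ branch, $t$ is uniform on $S$ independently of $s$, so its conditional law given $s$ is also uniform on $S$. The two conditional laws agree for every $s$, and since $s$ is drawn uniformly in both branches, the joint distributions of $(s,t)$ coincide.

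Once the two distributions are shown identical, I would conclude by a standard information-theoretic argument. Because $b$ is a uniform bit and the pair $(s,t)$ handed to $\attack$ carries no statistical dependence on $b$, the view of $\attack$ is independent of $b$; therefore $\Pr[a=b]=1/2$ for any (even computationally unbounded) $\attack$, giving $\Pr[\gapr_{\attack,\pg}(\usecpar)=1]=1/2$, which is within $\negl(\secpar)$ of the guessing bound and so the problem is trivially ``hard'' in the required sense.

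I do not anticipate a genuine obstacle here, since the statement is essentially a corollary of Observation~\ref{obs:uniform}; the only point demanding a little care is to phrase the conclusion correctly. The definition asks that the winning probability be $\negl(\secpar)$, whereas the honest interpretation is that the \emph{advantage} $|\Pr[\gapr=1]-1/2|$ is negligible. The mild subtlety is thus reconciling the literal $\Pr[\gapr=1]=1/2$ with the intended notion of hardness (advantage $0$); I would note that for transitive actions the adversary cannot beat random guessing, which is the operative meaning of the observation.
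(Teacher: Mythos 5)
Your proposal is correct and follows exactly the route the paper intends: the paper presents this observation as ``an easy corollary of Observation~\ref{obs:uniform},'' i.e., transitivity makes the single orbit all of $S$, so the pseudorandom distribution coincides with the uniform one and no adversary can do better than guessing. Your closing remark about reading the hardness condition as negligible \emph{advantage} over $1/2$ rather than negligible winning probability is a fair and correct observation about the paper's definition as literally stated.
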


Slightly generalizing the transitive case, it is not hard to see that
\gapr~problem is hard, if there exists a ``dominant'' orbit $O\subseteq S$.
Intuitively, this means that $O$ is too large such that random $s$ and $t$ from
$S$ would both lie in $O$ with high probability.
For example, consider the action of $\GL(n, \F)\times \GL(n, \F)$ on $\Matrix(n,
\F)$ by the left and right multiplications.
The orbits are determined by the ranks of matrices in $\Matrix(n, \F)$, and the
orbit of matrices of full-rank is dominant.
But again, such group actions seems not very useful for cryptographic purposes.
Indeed, we require the orbit structure to satisfy that random $s$ and $t$ do not
fall into the same orbit.
Let us formally put forward this condition.

\begin{definition}\label{def:dominant}
  We say that a group action $\alpha$ of $G$ on $S$ does not \emph{have a
    dominant orbit}, if
  \begin{equation*}
    \Pr_{s,t\gets S}\, [s, t \text{ lie in the same orbit}] = \negl(\lambda).
  \end{equation*}
\end{definition}

\begin{assumption}[Pseudorandom group action (\PRA) assumption]
  \label{asn:pra}
  There exists an $\pg$ outputting a group action without a dominant orbit,
  relative to which the $\gapr$ problem is hard.
\end{assumption}

The name comes from the fact that the PRA assumption says `in spirit' that the
function $\Gamma:G\times S\to S\times S$ given by $\Gamma(g, s)=(g\cdot s, s)$
is a secure PRG.
Here, it is only `in spirit', because the PRA assumption does not include the
usual expansion property of the PRG.
Rather, it only includes the inexistence of a dominant orbit.

\paragraph{Subsuming the classical Diffie-Hellman assumption.}

We now formulate the classical decisional Diffie-Hellman (DDH) assumption as an
instance of the pseudorandom group action assumption.
To see this, we need the following definition.

\begin{definition}
Let $\alpha:G\times S\to S$ be a group action. The \emph{$d$-diagonal action} of
$\alpha$, denoted by $\alpha^{(d)}$, is the group action of $G$ on $S^d$, the 
Cartesian product of $d$ copies of $S$, where
$g\in G$ sends $(s_1, \dots, s_d)\in S^d$ to $(g\cdot s_1, \dots, g\cdot s_d)$.
\end{definition}

The following observation shows that the classical DDH can be obtained by
instantiating \gapr~with a concrete group action.

\begin{observation}\label{obs:classical_ddh}
  Let $\alpha$ be the group action in Group Action~\ref{act:discrete_log}.
  The classical Decisional Diffie-Hellman assumption is equivalent to the \PRA
  assumption instantiated with $\alpha^{(2)}$, the $2$-diagonal action of
  $\alpha$.
\end{observation}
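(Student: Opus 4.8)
The plan is to unpack both the classical DDH assumption and the $2$-diagonal action of the discrete-log group action, and show their respective distinguishing games are identical up to a relabeling of objects. Recall the classical DDH assumption: fix a generator $g_0$ of the cyclic group $C_p$ of order $p$; the task is to distinguish triples $(g_0^a, g_0^b, g_0^{ab})$ from $(g_0^a, g_0^b, g_0^c)$ for uniformly random $a, b, c \in \integer_p$. On the other side, Group Action~\ref{act:discrete_log} has $G = \integer_p^*$ acting on $S = C_p \setminus \{\id\}$ by $\alpha(a, s) = s^a$, so its $2$-diagonal action $\alpha^{(2)}$ has $a \in \integer_p^*$ sending $(s_1, s_2) \in S^2$ to $(s_1^a, s_2^a)$. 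I would first write out what the \gapr~game looks like for $\alpha^{(2)}$: the challenger samples $(s_1, s_2) \gets S^2$ and $a \gets \integer_p^*$, and the challenge is the pair $\bigl((s_1, s_2), (t_1, t_2)\bigr)$ where either $(t_1, t_2) = (s_1^a, s_2^a)$ (the pseudorandom case) or $(t_1, t_2) \gets S^2$ (the random case).

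\textbf{The key identification} is the following dictionary. Write $s_1 = g_0^x$ and $t_1 = g_0^y$, so that specifying the first coordinates $(s_1, t_1)$ of the \gapr~challenge is the same as specifying the exponent pair $(x, y)$, which plays the role of $(a, ab)$ in DDH after we set $a := x$ and read off the secret action exponent. Concretely, I would map the \gapr~instance $\bigl((s_1, s_2), (t_1, t_2)\bigr)$ to the DDH-style tuple by taking discrete logarithms base $g_0$: writing $s_2 = g_0^b$, the pseudorandom case has $t_2 = s_2^a = g_0^{ab}$ while $s_1 = g_0$ can be normalized to the generator, giving exactly the Diffie--Hellman relation $(g_0, g_0^a, g_0^b, g_0^{ab})$, whereas the random case gives $t_2 = g_0^c$ for independent uniform $c$. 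After normalizing the first slot to the fixed generator $g_0$ (using that $\alpha$ is transitive on $S$, so one can apply a group element carrying $s_1$ to $g_0$ on both halves), the two games coincide.

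\textbf{I would then verify the two directions of the equivalence as efficient reductions.} In one direction, an adversary breaking the \gapr~game for $\alpha^{(2)}$ yields a DDH distinguisher: given a DDH challenge, embed it as a \gapr~instance via the dictionary above and forward the guessed bit. In the other direction, a DDH distinguisher yields a \gapr~adversary, using that sampling $s \gets S$ together with $g \gets \integer_p^*$ reproduces, through Observation~\ref{obs:uniform}, exactly the uniform distribution on $t$ within the orbit, matching the uniform exponents in DDH. The one genuinely careful point, and \textbf{the main obstacle}, is the discrepancy between the group $\integer_p^*$ of order $p-1$ acting (so the action exponent is a unit, nonzero mod $p$) and the DDH exponents ranging over all of $\integer_p$, together with the restriction $S = C_p \setminus \{\id\}$ excluding the identity. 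I would handle this by arguing these restrictions alter the relevant distributions only by a $1/p = \negl(\lambda)$ statistical distance: the probability that a uniform DDH exponent is $0$, or that a uniformly chosen group element lands on $\id$, is negligible, so the \gapr~and DDH advantages differ by at most a negligible amount and the equivalence of hardness goes through. This confirms that DDH is precisely the \PRA~assumption instantiated with $\alpha^{(2)}$.
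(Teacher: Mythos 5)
Your overall strategy---unpacking both games and matching the two pairs of distributions by bookkeeping exponents, with Observation~\ref{obs:uniform} supplying uniformity within the orbit---is the same as the paper's, and your explicit treatment of the $\integer_p$ versus $\integer_p^*$ discrepancy and of the excluded identity element is a welcome refinement that the paper glosses over. However, there is a genuine gap in the step you describe as ``normalizing the first slot to the fixed generator $g_0$.'' Transitivity of $\alpha$ guarantees that a group element $h$ with $s_1^h = g_0$ \emph{exists}, but computing it is exactly a discrete logarithm computation, so this normalization is not an efficient operation, and an equivalence of assumptions needs efficient reductions in both directions. Your dictionary therefore only works one way: given a fixed-generator DDH challenge $(g_0, g_0^a, g_0^b, g_0^{ab}\text{ or }g_0^c)$, you can raise all entries to a uniform $h\gets\integer_p^*$ to obtain a properly distributed \gapr instance (and you do need this re-randomization, which your sketch omits---setting $s_1 := g_0$ directly does not produce the uniform $s_1$ that the \gapr challenger samples, so the \gapr adversary's advantage is not guaranteed on such instances). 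In the reverse direction, a \gapr instance comes with a uniformly random base $s_1$, and there is no known efficient way to re-anchor it at the fixed $g_0$; this is precisely the fixed-versus-random-generator subtlety for DDH that the paper's footnote flags (citing \cite{CS98} and \cite{BMZ19}).

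The paper sidesteps the problem entirely by taking ``classical DDH'' to mean the random-generator version: after applying Observation~\ref{obs:uniform}, the two \gapr distributions for $\alpha^{(2)}$ are literally $((s, s^a), (s^b, s^{ab}))$ and $((s, s^a), (s^b, s^c))$ with $s\gets S$ and $a,b,c\gets\integer_p^*$, which \emph{is} random-generator DDH verbatim, so no normalization and no nontrivial reduction is needed in either direction. If you restate your target assumption as the random-generator variant, your argument closes (and your statistical-distance bookkeeping for the exponent ranges remains valid); as written, with a fixed generator, the implication ``\gapr hard $\Rightarrow$ DDH hard'' is not established.
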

\begin{proof}
  Recall from Group Action~\ref{act:discrete_log} defines an action $\alpha$ of
  $G\cong \integer_p^*$ on $S=C_p\setminus \{\id\}$ where $C_p$ is a cyclic
  group of order $p$.
  The $2$-diagonal action $\alpha^{(2)}$ is defined by $a\in \integer_p^*$
  sending $(s, t)\in S\times S$ to $(s^a, t^a)$. Note that while $\alpha$ is 
  transitive, $\alpha^{(2)}$ is not, and in fact it does not have a dorminant 
  orbit.

  \PRA instantiated with $\alpha^{(2)}$ then asks to distinguish between the
  following two distributions.
  The first distribution is $((s, t), (s', t'))$ where $s, t, s', t'\in_R S$.
  Since $\alpha$ is transitive, by Observation~\ref{obs:uniform}, this
  distribution is equivalent to $((s, s^a), (s^b, s^c))$, where $s\in_R S$ and
  $a, b, c\in_R G$.
  The second distribution is $((s, t), (s^b, t^b))$, where $s, t\in_R S$, and
  $b\in_R G$.
  Again, by Observation~\ref{obs:uniform}, this distribution is equivalent to
  $((s, s^a), (s^b, s^{ab}))$, where $s\in_R S$, and $a, b\in_R G$.

  We then see that this is just the Decisional Diffie-Hellman
  assumption\footnote{Here we use the version of DDH where the generator of the
    cyclic group is randomly chosen as also used in \cite{CS98}.
    A recent discussion on distinction between fixed generators and random
    generators can be found in \cite{BMZ19}}.
\end{proof}

As will be explained in Section~\ref{subsec:pra_requirement}, the pseudorandom
assumption is a strong one, in a sense much stronger than the one-way
assumption.
Therefore, Observation~\ref{obs:classical_ddh} is important because, by casting
the classical Diffie-Hellman assumption as an instance of the pseudorandom
assumption, it provides a non-trivial and well-studied group action candidate
for this assumption.

Of course, the DDH assumption is no longer secure under quantum attacks.
Recently, this assumption in the context of supersingular isogeny based
cryptography has been proposed by De Feo and Galbraith in~\cite{FG18}.
We will study the possibility for the $3$-tensor isomorphism problem as a
pseudorandom group action candidate in Section~\ref{sec:pra_candidate}

\paragraph{The $d$-diagonal pseudorandomness assumption.}
Motivated by Observation~\ref{obs:classical_ddh}, it will be convenient to
specialize $\gapr$ to diagonal actions, and make the following assumption.

\begin{definition}
  The $d$-diagonal pseudorandomness ($\gapr(d)$) problem for a group action
  $\alpha$, is defined to be the pseudorandomness problem for the $d$-diagonal
  group action $\alpha^{(d)}$. 
\end{definition}

We emphasize that $\gapr(d)$ is just $\gapr$ applied to group actions of a
particular form, so a special case of $\gapr$.
Correspondingly, we define $\PRA(d)$ to be the assumption that $\gapr(d)$ is
hard relative to some $\pg$.

Given a group action $\alpha:G\times S\to S$, let $F_\alpha=\{f_g:S\to S \mid
g\in G, f_g(s)=g\cdot s\}$.
It is not hard to see that $\PRA(d)$ is equivalent to say that $F_\alpha$ is a
$d$-query weak PRF in the sense of Maurer and Tessaro~\cite{MT08}.
This gives a straightforward cryptographic use of the $\PRA(d)$ assumption.

Given $d, e\in \integer^+$, $d<e$, it is clear that $\PRA(e)$ is a stronger
assumption than $\PRA(d)$.
Indeed, given an algorithm $A$ that distinguishes between
$$((s_1, \dots, s_d), (g\cdot s_1,
\dots g\cdot s_d)) \text{ and } ((s_1, \dots, s_d), (t_1, \dots, t_d)),$$ where
$s_i, t_j\gets S$, and $g\gets G$, one can use $A$ to distinguish between
$((s_1, \dots, s_e), (g\cdot s_1, \dots g\cdot s_e))$ and $((s_1, \dots, s_e),
(t_1, \dots, t_e))$, by just looking at the first $d$ components in each tuple.

The applications of the $\PRA$ assumption including more efficient
quantum-secure digital signature schemes and pseudorandom function
constructions are given in Section~\ref{sec:quantum_pra}.
Next, we will provide candidates to instantiate the $\PRA$ assumption.

\section{General linear actions on tensors: the pseudorandom action assumption}

\label{sec:pra_candidate}

\subsection{Requirements for a group action to be pseudorandom}

\label{subsec:pra_requirement}

Clearly, a first requirement for a group action to be pseudorandom is that it
should be one-way.
Further requirements naturally come from certain attacks.
We have devised the following attack strategies.
These attacks suggest that the pseudorandom assumption is closely related to the
orbit closure intersection problem which has received considerable attention
recently.

\paragraph{Isomorphism testing in the average-case setting.}
To start with, we consider the impact of an average-case isomorphism testing
algorithm on the pseudorandom assumption.
Recall that for a group action $\alpha:G\times S\to S$, an average-case
algorithm is required to work for instances $(s, t)$ where $s\gets S$ and $t$ is
arbitrary.
Let $n$ be the input size to this algorithm.
The traditional requirement for an average-case algorithm is that it needs to
work for \emph{all but at most} $1/\poly(n)$ fraction of $s\in S$, like such
algorithms for graph isomorphism~\cite{BES80} and for alternating matrix space
isometry~\cite{LQ17}.
However, in order for such an algorithm to break the pseudorandom assumption, it
is enough that it works for a non-negligible, say $1/\poly(n)$, fraction of the
instances.
This is quite relaxed compared to the traditional requirement.

\paragraph{The supergroup attack.}

For a group action $\alpha:G\times S\to S$, a supergroup action of $\alpha$ 
is
another group action $\beta:H\times S\to S$, such that (1) $G$ is a subgroup of
$H$, (2) the restriction of $\beta$ to $G$, $\beta|_G$, is equal to $\alpha$.
If it further holds that (3.1) the isomorphism problem for $H$ is easy, and (3.2)
$\beta$ is not dominant, we will then have the following so-called
\emph{supergroup attack}.
Give input $s,t \in S$, the adversary for the $\gapr$ problem of $\alpha$ will
use the solver for the isomorphism problem for $H$ to check if $s, t$ are from the
same orbit induced by $H$ and return $1$ if they are from the same orbit and $0$
otherwise.
If $s, t$ are from the same orbit induced by $G$, the adversary always returns
the correct answer as $G$ is a subgroup of $H$.
In the case that $s, t$ are independently chosen from $S$, by the fact that
$\beta$ is not dominant, the adversary will return the correct answer $0$ with
high probability.

\paragraph{The isomorphism invariant attack.}
Generalizing the condition (3) above, we can have the following more general
strategy as follows.
We now think of $G$ and $H$ as defining equivalence relations by their orbit
structures.
Let $\sim_G$ (resp $\sim_H$) be the equivalence relation defined by $G$ (resp.
$H$).
By the conditions (1) and (2), we have (a) $\sim_H$ is coarser than $\sim_G$.
By the condition (3.1), we have (b) $\sim_H$ is easy to decide.
By the condition (3.2), we have (c) $\sim_H$ have enough equivalence classes.
Clearly, if a relation $\sim$, not necessarily defined by a supergroup $H$,
satisfies (a), (b), and (c), then $\sim$ can also be used to break the $\PRA$
assumption for $G$.

Such an equivalence relation is more commonly known as an isomorphism invariant,
namely those properties that are preserved under isomorphism.
The sources of isomorphism invariants can be very versatile.
The supergroup attack can be thought of as a special case of category where the
equivalence relation is defined by being isomorphic under a supergroup action.
Another somewhat surprising and rich ``invariant'' comes from geometry, as we
describe now.

\paragraph{The geometric attack.} 
In the case of matrix group actions, the underlying vector spaces usually come
with certain geometry which can be exploited for the attack purpose.
Let $\alpha$ be a group action of $G$ on $V\cong \F^d$.
For an orbit $O\subseteq V$, let its Zariski closure be $\overline O$.
Let $\sim$ be the equivalence relation on $V$, such that for $s, t\in O$, $s\sim
t$ if and only if $\overline{O_s} \cap \overline{O_t} \neq \emptyset$.
It is obvious that $\sim$ is a coarser relation than $\sim_G$.
Furthermore, except some degenerate settings when $m$ or $n$ are very small,
there would be enough equivalence classes defined by $\sim$, because of the
dimension reason.
So (a) and (c) are satisfied.
Therefore, if we could test efficiently whether the orbit closures of $s$ and
$t$ intersect, (b) would be satisfied and we could break the $\PRA$ for
$\alpha$.
This problem, known as the orbit closure intersection problem, has received
considerable attention recently.

Another straightforward approach based on this viewpoint is to recall that the
geometry of orbit closures is determined by the ring of invariant polynomials
\cite{MFK94}.
More specifically, the action of $G$ on $V$ induces an action on $\F[V]$, the
ring of polynomial functions on $V$.
As $V\cong \F^d$, $\F[V]\cong \F[x_1, \dots, x_d]$.
Those polynomials invariant under this induced action form a subring of $\F[V]$,
denoted as $\F[V]^G$.
If there exists one easy-to-compute, non-trivial, invariant polynomial $f$ from
$\F[V]^G$, we could then use $f$ to evaluate on the input instances and
distinguish between the random setting (where $f$ is likely to evaluate
differently) and the pseudorandom setting (where $f$ always evaluates the same).

\subsubsection{Example attacks}
\label{subsec:attacks}

We now list some examples to illustrate the above attacks. 

\paragraph{An example of using the isomorphism invariant attack.}
We first consider the isomorphism invariant attack in the graph isomorphism
case.
Clearly, the degree sequence, consisting of vertex degrees sorted from large to
small, is an easy to compute isomorphism invariant.
A brief thought suggests that this invariant is already enough to break the
pseudorandom assumption for graph isomorphism.

\paragraph{An example of using the geometric attack.}
We consider a group action similar to the $3$-tensor isomorphism case (Group
Action~\ref{act:tensor}), inspired by the quantum marginal
problem~\cite{BCG+18}.
Given a $3$-tensor of size $\ell\times n\times m$, we can ``slice'' this
$3$-tensor according to the third index to obtain a tuple of $m$ matrices of
size $\ell$ by $n$.
Consider the action of $G=\O(\ell, \F)\times \O(n, \F)\times \SL(m, \F)$ on
matrix tuples $\Matrix(\ell\times n, \F)^m$, where the three direct product
factors act by left multiplication, right multiplication, and linear combination
of the $m$ components, respectively.
For a matrix tuple $(A_1, \dots, A_m)$ where $A_i\in\Matrix(\ell\times n, \F)$,
form an $\ell n\times m$ matrix $A$ where the $i$-th column of $A$ is obtained by
straightening $A_i$ according to columns.
Then $A^tA$ is an $m$ by $m$ matrix.
The polynomial $f=\det(A^tA)$ is then a polynomial invariant for this action.
For this note that the group $\O(\ell, \F)\times \O(n, \F)$ can be embedded as a 
subgroup of
$\O(\ell n, \F)$, so its action becomes trivial on $A^tA$.
Then the determinant is invariant under the $\SL(m, \F)$.
When $m<\ell n$, which is the interesting case, $\det(A^tA)$ is non-zero.
It follows that we have a non-trivial, easy-to-compute, polynomial invariant
which can break the \PRA assumption for this group action.

\paragraph{An example of using the supergroup attack.}
We then explain how the supergroup attack invalidates the $\PRA(d)$ assumption
for certain families of group actions with $d>1$.

Let $\alpha$ be a linear action of a group $G$ on a vector space $V\cong
\F^{N}$.
We show that as long as $d>N$, $\PRA(d)$ does not hold.
To see this, the action of $G$ on $V$ gives a homomorphism $\phi$ from $G$ to
$\GL(V)\cong \GL(N, \F)$.
For any $g\in G$, and $v_1, \dots, v_d\in V$, we can arrange an $N\times d$
matrix $S=[v_1, \dots, v_d]$, such that $T=[\phi(g)v_1, \dots,
\phi(g)v_d]=\phi(g)[v_1, \dots, v_d]$.
On the other hand, for $u_1, \dots, u_d\in V$, let $T'=[u_1, \dots, u_d]$.
Let us consider the row spans of $S$, $T$ and $T'$, which are subspaces of
$\F^d$ of dimension $\leq N<d$.
Clearly, the row spans of $S$ and $T$ are the same.
On the other hand, when $u_i$'s are random vectors, the row span of $T'$ is
unlikely to be the same as that of $S$.
This gives an efficient approach to distinguish between $T$ and $T'$.

We can upgrade the above attack even further as follows.
Let
$\alpha$ be a linear action of $G$ on the linear space of matrices 
$M=\Matrix(m\times n,
\F)$. Recall that $\GL(m, \F)\times \GL(n, \F)$ acts on $M$ by left
and right multiplications. Suppose $\alpha$ gives rise to a homomorphism
$\phi:G\to \GL(m, \F)\times \GL(n, \F)$. For $g\in G$, if $\phi(g)=(A, B)\in
\GL(m, \F)\times \GL(n, \F)$, we let $\phi_1(g):=A\in \GL(m, \F)$, and
$\phi_2(g)=B\in \GL(n, \F)$. We now show that when $d>(m^2+n^2)/(mn)$, $\PRA(d)$
does
not hold for $\alpha$.
To see this, for any $g\in G$, and $S=(A_1, \dots, A_d)\in \Matrix(m\times n, \F)^d$,
let
$$T=(\phi_1(g)^tA_1\phi_2(g), \dots, \phi_1(g)^tA_d\phi_2(g)).$$
On the other hand,
let $T'=(B_1, \dots, B_d)\in M^d$. Since
$$\dim(S)=\dim(\GL(m\times n, \F)^d)=mnd>m^2+n^2=\dim(\GL(m, \F)\times \GL(n,
\F)),$$
$\alpha$ does not have a dominant orbit (cf. Definition~\ref{def:dominant}) This
means 
that, when $B_i$'s are sampled randomly
from $S$, $T'$ is unlikely to be in the same orbit as $S$. Now we use the fact
that, the isomorphism problem for the action of $\GL(m, \F)\times \GL(n, \F)$ on 
$S$ can
be solved in deterministic polynomial time
\cite[Proposition 3.2]{IQ18}. This gives an efficient approach to distinguish
between $T$ and $T'$.

Note that the set up here captures the Group Actions~\ref{act:code} and
\ref{act:poly} in
Section~\ref{subsec:group_actions}. 
For
example, suppose for Group Action~\ref{act:code}, we consider linear codes which
are $n/2$-dimensional subspaces of $\F_q^n$. Then we have $m=n/2$, so $\PRA(3)$
for this action does not hold, as $3>(m^2+n^2)/(mn)=5/2$.

On the other hand, when $d\leq (m^2+n^2)/(mn)$, such an attack may fail, simply
because of the existence of a dominant orbit.

\subsection{The general linear action on tensors as a pseudorandom action
  candidate}

We have explained why the general linear action on tensors is a good candidate
for the one-way assumption in Section~\ref{sec:owa_candidate}.
We now argue that, to the best of our knowledge, it is also a candidate for the
pseudorandom assumption.

We have described the current status of average-case algorithms for $3$-tensor 
isomorphism problem in Section~\ref{subsec:current_status}.
One may expect that, because of the relaxed requirement for the average-case
setting as discussed in Section~\ref{subsec:pra_requirement}, the algorithms 
in~\cite{LQ17,BFV13} may be
accelerated.
However, this is not the case, because these algorithms inherently
enumerate all vectors in $\F_q^n$, or improve somewhat by using the birthday
paradox. 

We can also explain why the relaxed requirement for the average-case setting is
still very difficult, by drawing experiences from computational group theory,
because of the relation between \GLAT and Group Action~\ref{act:amsi}, which in
turn is closely related to the group isomorphism problem as explained in
Section~\ref{subsec:group_actions}.
In group theory, it is known that the number of non-isomorphic $p$-groups of
class $2$ and exponent $p$ of order $p^\ell$ is bounded as
$p^{\frac{2}{27}\ell^3+\Theta(\ell^2)}$ \cite{BNV07}.
The relaxed average-case requirement in this case then asks for an algorithm
that could test isomorphism for a subclass of such groups containing
non-isomorphic groups as many as
$p^{\frac{2}{27}\ell^3+\Theta(\ell^2)}/\poly(\ell, \log
p)=p^{\frac{2}{27}\ell^3+\Theta(\ell^2)}$.
This is widely regarded as a formidable task in computational group theory: at
present, we only know of a subclass of such groups with $p^{O(\ell^2)}$ many
non-isomorphic groups that allows for an efficient isomorphism test \cite{LW12}.

The supergroup attack seems not useful here. The group $G=\GL(\ell, 
\F)\times\GL(n, \F)\times\GL(m, \F)$ naturally lives in $\GL(\ell n m, \F)$. 
However, by Aschbacher's classification of maximal subgroups of finite classical 
groups~\cite{Asc84}, there
are few natural supergroups of $G$ in $\GL(\ell n m, \F)$. The obvious ones 
include subgroups isomorphic to $\GL(\ell n, \F)\times \GL(m, \F)$, which is not 
useful 
because it has a dominant orbit (Definition~\ref{def:dominant}).

The geometric attack seems not useful here either.
The invariant ring here is trivial \cite{DW00}\footnote{If instead of $\GL(\ell,
  \F)\times\GL(n, \F)\times\GL(m, \F)$ we consider $\SL(\ell, \F)\times \SL(n,
  \F)\times\SL(m, \F)$, the invariant ring is non-trivial -- also known as the 
  ring of semi-invariants for the corresponding $\GL$ action -- but highly
  complicated.
  When $\ell=m=n$, we do not even know one single easy-to-compute non-trivial
  invariant.
  It further requires exponential degree to generate the whole invariant
  ring~\cite{DM19}.}.
For the orbit closure intersection problem, despite some recent exciting
progress  in~\cite{BGO+18,BCG+18,AGL+18,DM18,IQS17}, the current
best algorithms for the corresponding orbit closure intersection problems still
require exponential time.

Finally, for the most general isomorphism invariant attack, the celebrated paper
of Hillar and Lim~\cite{HL13} is just titled ``Most Tensor Problems Are
NP-Hard.''
This suggests that getting one easy-to-compute and useful isomorphism invariant for
\GLAT is already a challenging task. Here, useful means that the invariant does 
not lead to an equivalence relation with a dominant class in the sense of
Definition~\ref{def:dominant}.

The above discussions not only provide evidence for \GLAT to be pseudorandom, but 
also highlight how this problem connects to
various mathematical and computational disciplines.
We believe that this could serve a further motivation for all these works in
various fields.

\section{Primitives from the one-way assumption: proving quantum security}
\label{sec:quantum_owa}

Based on any one-way group action, it is immediate to derive a one-way
function family. A bit commitment also follows by standard techniques,
which we shall discuss later on in Section~\ref{sec:supp}.

In this section, we focus on the construction of a digital signature
that exists in the literature. It follows a very successful approach
of applying the Fiat-Shamir transformation on an identification
scheme. However, proving quantum security of this generic method turns
out to be extremely challenging and delicate. For this reason, we
include a complete description of the construction and a formal
security proof in the quantum setting.

\subsection{Identification: definitions}
\label{sec:owa_id-def}

An \emph{identification scheme} $\ids$ consists of a triple of
probabilistic polynomial-time algorithms\footnote{We only consider classical
  protocols where the algorithms can be efficiently realized on
  classical computers.} $(\kg, \prover,\verifier)$:

\begin{itemize}[label=$\bullet$]
\item Key generating: $(pk,sk)\gets \kg(\usecpar)$ generates a public
  key $pk$ and a secret key $sk$.
\item Interactive protocol: $\prover$ is given $(sk,pk)$, and
  $\verifier$ is only given $pk$. They then interact in a few rounds,
  and in the end, $\verifier$ outputs either 1 (i.e., ``accept'') or
  $0$ (i.e., ``reject'').
\end{itemize}

We assume that the keys are drawn from some relation, i.e.
$(sk,pk)\in R$, and $(\prover,\verifier)$ is an interactive protocol
to prove this fact. Let
$$L_R := \{pk: \exists\, sk \text{ such that } (sk,pk) \in R \} \, ,$$
be the set of valid public keys.

We will exclusively consider identification schemes of a special form,
terms as $\Sigma$ protocols. In a $\Sigma$ protocol
$(\prover,\verifier)$, three messages are exchanged in total:
\begin{itemize}[label=$\bullet$]
\item Prover's initial message: $I\gets \prover(sk,pk)$. $I$ is
  usually called the ``commitment'', and comes from $\bit^{\lencom}$.
\item Verifier's challenge: $c\gets \verifier(pk,I)$. Let the set of
  challenge messages (the challenge space) be $\bit^{\lench}$.
\item Prover's response: prover computes a response
  $r\in \bit^{\lenr}$ based on $(sk,I,c)$ and its internal state.
\end{itemize}
Here $\lencom,\lench,\lenr$ are interpreted as functions of the
security parameter $\secpar$. We omit writing $\secpar$ explicitly
for the ease of reading.

A basic requirement of an $\ids$ is the \emph{correctness}, which is
basically the completeness condition of the interactive protocol
$(\prover,\verifier)$ on correctly generated keys.

\begin{definition} An $\ids = (\kg,\prover,\verifier)$ is correct if
  \begin{equation*}
    \Pr \bigl[ \verifier(pk)=1 : (pk,sk)\gets \kg(\usecpar) \bigr]
    \geq 1 - \negl(\secpar).
  \end{equation*}

  \label{def:id-corr}
\end{definition}

Instead of defining security for $\ids$ directly, we usually talk
about various properties of the $\Sigma$ protocol associated with
$\ids$, which will make the $\ids$ useful in various applications. In
the following, we consider an adversary $\attack$ which operates and
shares states in multiple stages. Our notation does not explicitly
show the sharing of states though.

\begin{definition}(Adapting~\cite[Definition 4]{UnruhAC17}) Let
  $(\prover,\verifier)$ be a $\Sigma$ protocols with message length
  parameters $\lencom,\lench$ and $\lenr$.
  \begin{itemize}
  \item \textbf{Statistical soundness}: no adversary can generate a
    public key not in $L_R$ but manage to make $\verifier$
    accept. More precisely, for any algorithm $\attack$ (possibly
    unbounded),
    \begin{equation*}
      \begin{split}
        \Pr \bigl[ \verifier(pk,I,c,r) = 1 \wedge pk \notin L_R:&\\
        r \gets \attack(pk,I,c), & c\gets \verifier(pk, I), (pk,I)\gets
        \attack(\usecpar) \bigr] \leq \negl(\secpar).
      \end{split}
    \end{equation*}

  \item \textbf{Honest-verifier zero-knowledge (HVZK)}: there is a
    quantum polynomial-time algorithm $\hvsim$ (the simulator) such that for
    any quantum polynomial-time adversary $\attack$,
    \begin{equation*}
      \begin{split}
        \Bigl|\, & \Pr \bigl[ \attack(pk,I,c,r) = 1:
        (I,c,r) \gets (\prover(sk), \attack(pk)),
        (sk,pk)\gets \kg(\usecpar) \bigr] \\
        - & \Pr \bigl[ \attack(pk,I,c,r) = 1: (I,c,r) \gets \hvsim(pk),
        (sk,pk)\gets \kg(\usecpar) \bigr] \,\Bigr| \leq \negl(\secpar).
      \end{split}
    \end{equation*}

  \item \textbf{Computational special soundness}: from any two
    accepting transcripts with the same initial commitment, we can
    extract a valid secret key. Formally, there is a quantum polynomial-time
    algorithm $\mathcal{E}$ such that for any quantum polynomial-time $\attack$, we have
    that
    \begin{equation*}
      \begin{split}
        \Pr \bigl[ (sk, pk) \notin R\wedge \verifier(pk, I,c,r) \wedge
        \verifier(pk, I,c',r') =1 \wedge c\neq c': & \\
        (pk,I,c,r, c' , r' )\gets \attack(\usecpar), sk \gets \mathcal{E}(pk,
        I,c,r,c',r') \bigr] & \leq \negl(\secpar).
      \end{split}
    \end{equation*}
  \item \textbf{Unique response}: it is infeasible to find two
    accepting responses for the same commitment and challenge. Namely,
    for any quantum polynomial-time $\attack$,
    \begin{equation*}
      \Pr \bigl[ r\neq r'\wedge \verifier(pk,I,c,r) = 1 \wedge
      \verifier(pk,I,c,r') = 1 :
      (pk,I,c,r,r')\gets \attack(\usecpar) \bigr] \leq \negl(\secpar).
    \end{equation*}
  \item \textbf{Unpredictable commitment}: Prover's commitment has
    superlogarithmic collision-entropy.
    \begin{align*}
      \Pr \bigl[ c = c': c\gets \prover(sk,pk), c'\gets\prover(sk,pk),
      (sk,pk) \gets \kg(\usecpar) \bigr] \leq \negl(\secpar).
    \end{align*}
  \end{itemize}
 \label{def:sigma-prop}
\end{definition}

\subsection{Identification: construction from \OWA}
\label{sec:owa_id-cons}

We construct an $\ids$ based on the $\gainv$ problem. This is
reminiscent of the famous zero-knowledge proof system for
graph-isomorphism.

\begin{figure}[htb!]
  \centering
  \begin{protocol}
    \ul{Protocol $\ids$}\\[1em]
    Let $\pg$ be a family of group actions that is \OWA.
    Choose public parameters $\params: = (G,S,\alpha)$ to be $\pg(\usecpar)$.
    Construct $\ids = (\kg_0, \prover_0,\verifier_0)$ as follows:
    \begin{itemize}[label=$\bullet$]
    \item $\kg_0(\usecpar)$: Sample uniformly random $ s\gets S$ and $g\gets G$,
      and compute $t = \alpha(g,s)$.
      Output $pk := (s, t)$ and $sk := g$.
    \item $\prover_0$ commitment: $\prover_0$ picks a random $h\gets
      G$. Compute $ I = \alpha(h, s)$ and send to $\verifier_0$.
    \item $\verifier_0$ challenge: $c \gets \bit$.
    \item $\prover_0$ response: if $c = 0$, $\prover_0$ sends $r: = h$; if $c =
      1$, $\prover_0$ sends $r: = hg^{-1}$.
    \item $\verifier_0$ verdict: if $c = 0$, $\verifier_0$ outputs $1$
      iff. $\alpha(r, s) = I$; if $c = 1$, $\verifier_0$ outputs $1$
      iff. $\alpha(r,t) = I$.
    \end{itemize}
  \end{protocol}
  \caption{Identification protocol $\ids$ based on $\gainv$.}
  \label{fig:ids}
\end{figure}

To get a statistically sound protocol, we compose $\ids$'s interactive
protocol $(\prover_0,\verifier_0)$ in parallel
$\prep=\omega(\log \secpar)$ times. We denote the resulting protocol
$\gaids$.
\begin{figure}[htb!]
  \centering
  \begin{protocol}
    \ul{Protocol $\gaids$}\\[1em]
    Given $\ids = ( \kg_0, \prover_0,\verifier_0)$.
    Choose public parameters $\params: = (G,S,\alpha)$ to be $\pg(\usecpar)$.
    Construct $\gaids = (\kg, \prover,\verifier)$ as follows:
    \begin{itemize}[label=$\bullet$]
    \item $\kg(\usecpar)$: run $\kg_0$ independently $\ell$
      times. Output $pk := \{(s_i,t_i)\}_{i=1}^{\prep}$ and
      $sk := \{g_i\}_{i=1}^{\prep}$.
    \item $\prover$ commitment: for $i =1, \ldots, \prep$, run
      $\prover_0$ to produce $h_i\gets G$ and
      $ I_i = \alpha(h_i, s_i)$. Each time $\prover$ uses fresh
      randomness. Send $(I_i)_{i=1}^\prep$ to $\verifier$.
    \item $\verifier$ challenge: run $\verifier_0$ in parallel $\prep$
      times, i.e., $c \gets \bit^\prep$.
    \item $\prover$ response: for $ i =1, \ldots, \prep$, produce
      $r_i \gets \prover_0(c_i,sk,pk,h_i,I_i)$. Send $\verifier$
      $(r_i)_{i=1}^\prep$.
    \item $\verifier$ verdict: for $ i =1, \ldots, \prep$, run
      $b_i\gets \verifier_0(pk,I_i,c_i,r_i)$. Outputs $1$
      iff. all $b_i=1$.
  \end{itemize}
  \end{protocol}
  \label{fig:gaids}
  \caption{Identification protocol $\gaids$}
\end{figure}

\begin{theorem} $\gaids$ has \emph{correctness, statistically
    soundness, HVZK} and \emph{unpredictable commitment}, assuming
  Assumption~\ref{asn:owa} holds.
\label{thm:gaids}
\end{theorem}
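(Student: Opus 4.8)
The plan is to check the four properties in turn; correctness, HVZK, and statistical soundness turn out to be unconditional consequences of the group-action axioms and the parallel structure, so Assumption~\ref{asn:owa} is only (mildly) needed for unpredictable commitment. Correctness is immediate from the identity $\alpha(gh,s)=\alpha(g,\alpha(h,s))$ defining Action~\ref{act:tensor}: in a single copy a $c=0$ challenge is met by $r=h$ since $\alpha(h,s)=I$ by construction, while a $c=1$ challenge is met by $r=hg^{-1}$ because $\alpha(hg^{-1},t)=\alpha(hg^{-1},\alpha(g,s))=\alpha(h,s)=I$. Hence every honestly generated copy accepts with probability $1$, and so does $\gaids$ (using the model's guarantee that $\PROD$, $\INV$ and the action are evaluated correctly on bit strings).

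For HVZK I would exhibit a simulator $\hvsim$ that, on input only $pk=(s,t)$, samples $c\gets\bit$ and $r\gets G$ uniformly and sets $I=\alpha(r,s)$ when $c=0$ and $I=\alpha(r,t)$ when $c=1$. A one-line check shows the simulation is \emph{perfect}: in a real transcript $h\gets G$ is uniform and $(I,r)=(\alpha(h,s),h)$ when $c=0$, while when $c=1$ one writes $h=rg$ and uses $\alpha(rg,s)=\alpha(r,t)$, so in each branch $(I,r)$ has exactly the simulated law (here $r$ uniform iff $h$ uniform). Running $\hvsim$ independently $\prep$ times yields a perfect simulator for $\gaids$, which therefore fools every distinguisher, quantum included, with zero advantage; no assumption is used.

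For statistical soundness I would first treat a single copy: if $s,t$ lie in different orbits then the committed $I$ can lie in the orbit of at most one of $s,t$, so at most one challenge bit admits a response passing the check, and an (unbounded) cheating prover succeeds with probability at most $1/2$ over the challenge. Since $\gaids$ runs $\prep=\omega(\log\secpar)$ copies with independently drawn challenges and accepts only when all copies accept, the target bound is that any $pk\notin L_R$ is accepted with probability at most $2^{-\prep}=\negl(\secpar)$. Making this amplification rigorous against an \emph{adaptive} prover — i.e.\ arguing that a false statement forces failure in essentially every copy, the standard soundness amplification for parallel repetition of a public-coin $\Sigma$-protocol with binary challenge — is the most delicate of the four verifications and the step I expect to be the main obstacle.

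Finally, unpredictable commitment I would obtain by computing the commitment's collision probability directly. In one copy the commitment is $I=\alpha(h,s)$ with $h\gets G$ uniform, so by Observation~\ref{obs:uniform} $I$ is uniform over the orbit $O_s$ and two independent commitments collide with probability $1/|O_s|$; across the $\prep$ copies the full commitment collides with probability $\prod_i 1/|O_{s_i}|\le 1/|O_{s_1}|$, negligible as soon as a single random orbit is super-polynomially large. This orbit-size lower bound is the one place I would invoke Assumption~\ref{asn:owa}: were random orbits only polynomial (in the extreme a fixed point, where $t=s$ and $g'=\id$ inverts trivially), \gainv would be easy; and for the \GLAT instantiation generic $3$-tensors have exponentially large orbits, giving collision entropy $\omega(\log\secpar)$ as required.
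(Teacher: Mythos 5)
Your proposal is correct and follows essentially the same route as the paper's proof: the identical perfect simulator for HVZK, the identical one-copy argument (a pair in different orbits admits a response to at most one challenge bit) amplified to $2^{-\prep}$ for statistical soundness, and the identical collision-probability computation $1/\abs{O_s}$ for unpredictable commitment (the paper phrases this last point as near-injectivity of $g\mapsto g\cdot s$ for its candidate actions, whereas your observation that polynomial-size orbits would yield a birthday-style \gainv-breaker is a slightly cleaner way to invoke Assumption~\ref{asn:owa}). The amplification step you flag as the main obstacle is dispatched by the paper in one line, and rightly so: the protocol is public-coin, so conditioned on the adversary's commitments the bits $c_i$ are still uniform and independent, and each copy whose pair lies in different orbits is answered correctly with probability at most $1/2$ independently of the others, leaving no adaptivity issue to handle.
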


\begin{proof} We prove the properties one by one.
  \begin{itemize}
  \item \textbf{Correctness}. This is clear.
  \item \textbf{Statistical soundness}. For any adversary $\attack$ who
    produces some $pk = (s,t)\notin L_R$, it implies that for any
    $i\in [\prep]$, it can only answer one of the two challenges
    ($c_i = 0$ or $1$) but not both. Since $c_i$'s are all uniformly
    chosen, $\verifier$ will reject except with probability
    $(\frac{1}{2})^{\prep} = \negl(\secpar)$ (noting that
    $\prep = \omega(\log\secpar)$).
  \item \textbf{HVZK}.  We construct a simulator $\hvsim$ in
    Fig.~\ref{fig:hvsim}.  The simulated transcript is identically
    distributed as the real execution.
    \begin{figure}[htb!]
      \centering
      \begin{protocol}
        Given $pk = \{(s_i,t_i)\}_{i=1}^{\prep}$, which is generated
        $(pk,sk) \gets \kg(\usecpar)$,
        \begin{itemize}
        \item $\hvsim$ generates $c \gets \bit^{\prep}$.
        \item For $i = 1, \ldots \prep$, if $c_i = 0$, let
          $I_i := \alpha(h_i, s_i)$ and $r_i : = h_i$; if $c_i=1$, let
          $I_i := \alpha(h_i, t_i)$ and $r_i : = h_i$.  Here
          $h_i \gets G $ are sampled randomly for all $i$.
        \item Output $(I,c,r) = (I_i,c_i,r_i)_{i=1}^\prep$.
        \end{itemize}
      \end{protocol}
      \caption{Simulator $\hvsim$}
      \label{fig:hvsim}
    \end{figure}
  \item \textbf{Special soundness}. If $\attack$ can produce $(I,c,r)$
    and $(I,c',r')$ that are both accepting with $c \neq c'$. Then at
    least $c_i \neq c_i'$ for some $i \in [\prep]$. The corresponding
    $r_i$ and $r_i'$ are hence $h$ and $hg^{-1}$ from which we can
    recover the secret key $g$.
  \item \textbf{Unpredictable commitment}. Two commitment messages
    collide only if $\alpha(g,s) = \alpha(g',s)$ for random
    $g,g'\gets G$. All our candidate group actions are (almost)
    injective.
  \end{itemize}
\end{proof}

\subsection{Digital signature from \OWA}
\label{sec:owa_sign}

\begin{definition} A \emph{digital signature scheme} consists of a
  triple of probabilistic polynomial-time algorithms $(\skg,\sign,\vrfy)$
  where
  \begin{itemize}[label=$\bullet$]
  \item $\skg$: $(pk,sk) \gets \skg(\usecpar)$ generates a pair of
    secret key and public key.
  \item $\sign$: on input $sk$ and message $m\in \mathcal{M}$, outputs
    $\sigma \gets \sign_{sk}(m)$.
  \item $\vrfy$: on input $pk$ and message-signature pair
    $(m,\sigma)$, output $\vrfy_{pk}(m,\sigma) = \text{acc/rej}$.
  \end{itemize}
\label{def:ds}
\end{definition}

A signature is secure if no one without the secret key can forge a
valid signature, even if it gets to see a number of valid
message-signature pairs. This is modeled as giving an adversary the
signing oracle. We give the formal definition below which explicitly
incorporates a random oracle $H$ that is available to all users, and
an adversary can access in quantum superposition. We stress that we do
not allow quantum access to the signing oracle, which is a stronger
attack model (cf.~\cite{BZC13}).

\begin{definition}[Unforgeability] A signature scheme
  $(\skg,\sign,\vrfy)$ is \emph{unforgeable} iff. for all quantum
  polynomial-time algorithm $\attack$,
  \begin{equation*}
    \begin{split}
      \Pr \bigl[ \vrfy^H(pk,\sigma^*,m^*) = 1 \wedge m^*\notin
      \signlist: & \\
      (pk,sk)\gets \skg(\usecpar), & (m^*,\sigma^*) \gets
      \attack^{H,\sign_{sk}}(\secpar,pk) \bigr]\leq \negl(\secpar).
    \end{split}
  \end{equation*}
  Here $\signlist$ contains the list of messages that $\attack$ queries
  to the (classical) signing oracle $\sign_{sk}(\cdot)$.
  \label{def:uf}
\end{definition}

Note the unforgeability does not rule out an adversary that produces a
new signature on some message that it has queried before. Strong
unforgeability takes this into account.

\begin{definition}[Strong Unforgeability] A signature scheme
  $(\skg,\sign,\vrfy)$ is \emph{strongly unforgeable} iff. for all quantum
  polynomial-time algorithm $\attack$,
  \begin{equation*}
    \begin{split}
      \Pr \bigl[ \vrfy^H(pk,\sigma^*,m^*) = 1 \wedge (m^*,\sigma^*) \in
      \signlist: & \\
      (pk,sk) \gets \skg(\usecpar), (m^*,\sigma^*) & \gets
      \attack^{H,\sign_{sk}}(\secpar,pk) \bigr] \leq \negl(\secpar).
    \end{split}
  \end{equation*}
  Here $\signlist$ contains the list of message \emph{and signature}
  pairs that $\attack$ queries to the (classical) signing oracle
  $\sign_{sk}(\cdot)$.
  \label{def:suf}
\end{definition}

Fiat and Shamir proposed a simple, efficient, and generic method that
converts an identification scheme to a signature scheme using a hash
function, and the security can be proven in the random oracle
model~\cite{FS86,PS00}.
Relatively recent, Fischlin proposed a variant to partly reduce the
reliance on the random oracle~\cite{Fis05}.
However, as shown in~\cite{ARU14}, both of them seem difficult to
admit a security proof in the quantum setting.
Instead, Unruh~\cite{UnruhEC15} proposed a less efficient
transformation, hereafter referred to as Unruh transformation, and
proved its security in the quantum random oracle model.
Our $\gaids$ satisfies the conditions required in Unruh transformation,
and hence we can apply it and obtain a digital signature scheme $\gasign$.

\begin{figure}[htb!]
  \centering
  \begin{protocol}
    \ul{Signature scheme $\unruhsign$ based on Unruh transformation}\\[1em]
    Let $\secpar$ be the security parameter.
    Let $t$ and $s$ be integers such that $t\log s = \omega(\log \secpar)$.
    Let $\lencom,\lench,\lenr$ be the length of the commitment, challenge and
    response respectively.
    Choose hash functions $H_1: \bit^* \to \{1,\ldots, s\}^t$ and $H_2:
    \bit^{\lenr}\to \bit^{\lenr}$.
    For an identification scheme $\ids$, we construct a signature scheme
    $\unruhsign$ as follows:
    \begin{itemize}
    \item $\skg$: run $\kg(\usecpar)$ of $\ids$ to obtain
      $(pk,sk)$.
    \item $\sign$: for input $sk$ and message $m$ do the following
      \begin{enumerate}[label=\roman*)]
      \item for $i = 1,\ldots, t$ and $j = 1,\ldots, s$, generate
        $I_i \gets \prover(pk,sk)$,
        $c_{i,j}\gets \bit^{\lench}\backslash
        \{c_{i,1},\ldots,c_{i,j-1}\}$. Then compute
        $r_{i,j} \gets \prover(c_{i,j},sk)$ and
        $h_{i,j} := H_2(r_{i,j})$.
      \item compute
        $ H_1(pk,m, (I_i)_{i=1,\ldots,t},
        (c_{i,j},h_{i,j})_{i=1,\ldots,t,j=1,\ldots,s})$. Partition
        the output as $J_1\| \ldots \| J_t$ where each
        $J_i \in [s]$.
      \item output
        $\sigma:= ((I_i)_{i =1,\ldots,t},
        (c_{i,j},h_{i,j})_{i=1\,\ldots,t,j=1,\ldots,s},
        (r_{i,J_i})_{i=1,\ldots,t})$.
      \end{enumerate}
    \item $\vrfy$: for input $(m,\sigma)$ and public key $pk$,
      \begin{enumerate}[label=\roman*)]
      \item parse $\sigma$ as
        $((I_i)_{i=1,\ldots,t},
        (c_{i,j},h_{i,j})_{i=1,\ldots,t,j=1,\ldots,s},
        (r_i)_{i=1,\ldots,t})$.
      \item compute
        $J_1\| \ldots \| J_t := H_1(pk,m, (I_i)_{i=1,\ldots,t},
        (c_{i,j},h_{i,j})_{i=1,\ldots,t,j=1,\ldots,s})$.
      \item for $i=1,\ldots ,t$, check $c_{i,1},\ldots, c_{i,s}$
        distinct; check $\verifier(pk, I_i,c_{i,J_i}, r_i) = 1$;
        check $h_{i,J_i} = H_2(r_i)$.
      \item accept if all checks pass.
      \end{enumerate}
    \end{itemize}
  \end{protocol}
  \caption{Unruh transformation}
  \label{fig:id-sign}
\end{figure}

\begin{theorem}[Adapting Corollary 19 \& Theorem 23 \& of~\cite{UnruhEC15}] If
  an identification scheme $\ids$ have correctness, HVZK and special soundness,
  then protocol $\unruhsign$ in Fig.~\ref{fig:id-sign} is a strongly unforgeable
  signature in \QRO.
\label{thm:id-sign}
\end{theorem}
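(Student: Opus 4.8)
The plan is to obtain this statement as a direct instantiation of Unruh's generic transformation, so that the real work is to verify that our identification scheme satisfies the interface demanded by \cite[Corollary 19, Theorem 23]{UnruhEC15} and then to invoke those results, rather than to redo the quantum-random-oracle analysis from scratch. First I would recall the precise shape of Unruh's theorems: they take any $\Sigma$-protocol enjoying correctness, HVZK, and special soundness whose underlying relation $R$ is hard to witness, and show that the transformed scheme $\unruhsign$ of Figure~\ref{fig:id-sign}---with $H_1, H_2$ modeled as quantum-accessible random oracles---is existentially unforgeable in the QROM, with the upgrade to \emph{strong} unforgeability requiring in addition the unique-response property.

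The verification of hypotheses is the routine part. Correctness, HVZK, and (computational) special soundness were already established for our scheme in Theorem~\ref{thm:gaids} and its proof, with the relation being $R=\{(g,(s,t)) : \alpha(g,s)=t\}$; the hardness of recovering a witness $g$ from the statement $(s,t)$ is exactly the $\gainv$ hardness supplied by Assumption~\ref{asn:owa}. I would then check the two format conditions that the transformation imposes. The challenge space must be polynomially enumerable, so that the $s$ distinct challenges $c_{i,j}$ can be sampled and checked distinct; this holds trivially since each copy uses the binary challenge set $\bit$. For the strong variant I would verify the unique-response property, which holds for the \GLAT instantiation because a generic tensor has trivial stabilizer, so for a fixed pair $(I_i,c_{i,j})$ the accepting response is almost surely unique---the same genericity already used for unpredictable commitment in Theorem~\ref{thm:gaids}.

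With the hypotheses in place the conclusion follows by transporting Unruh's reduction verbatim, and for transparency I would spell out its two moving parts. Signing queries are answered without $sk$ by running the HVZK simulator $\hvsim$ and adaptively programming $H_1$ to return the indices $J_i$ that select the simulated responses and $H_2$ to agree on the revealed values; unpredictable commitment ensures these programmed points are hard to anticipate, so the simulation is indistinguishable even under superposition queries. From a forgery, Unruh's \emph{online extractor} measures one of the adversary's $H_2$-queries to recover, for some repetition $i$, accepting responses to two distinct challenges sharing one commitment, and feeding these into the special-soundness extractor $\mathcal{E}$ yields a witness $g$, contradicting $\gainv$ hardness. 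The main obstacle---and the reason we cite Unruh rather than give a self-contained argument---is precisely this extraction step: classical Fiat--Shamir relies on rewinding, which is unavailable against a quantum adversary, and it is Unruh's commit-via-$H_2$-then-reveal device together with measure-a-random-query online extraction that circumvents it. Since that machinery is fully developed in \cite{UnruhEC15} and our scheme meets its interface, no new quantum-algorithmic analysis is needed here, and the only genuinely scheme-specific content is the two format checks above.
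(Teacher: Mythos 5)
Your overall route coincides with the paper's: Theorem~\ref{thm:id-sign} is presented as an adaptation of Unruh's Corollary~19 and Theorem~23, and the paper offers no proof beyond that citation, so invoking \cite{UnruhEC15} and checking that the scheme meets its interface is exactly what is intended. Two points of friction, however. First, most of your hypothesis-checking (correctness, HVZK, special soundness, the binary challenge space, the hardness of the relation via $\gainv$) is really the content of Corollary~\ref{cor:owa_sign}, which instantiates the transform with $\gaids$; the theorem itself is a generic statement about an arbitrary $\ids$ with the three listed properties, so none of that verification belongs to its proof. Second, and more substantively, you import the wrong side-condition for strong unforgeability: the theorem promises \emph{strong} unforgeability from correctness, HVZK and special soundness alone, with no unique-response hypothesis, and this matches Unruh's transform, where strongness comes from the commitments $h_{i,j}=H_2(r_{i,j})$ carried inside the signature --- replacing an accepting response by a different one while keeping the signature valid would require a second preimage under the random oracle $H_2$, not a violation of unique responses of the underlying $\Sigma$-protocol. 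The unique-response requirement you describe is the one needed for the plain Fiat--Shamir variant (Theorem~\ref{thm:dual-fs}), not for $\unruhsign$. As written, your argument establishes the theorem only under an added hypothesis, which you then verify only for the \GLAT instantiation via a trivial-stabilizer heuristic that the paper nowhere asserts (Theorem~\ref{thm:gaids} deliberately omits unique responses from its list). Dropping that detour and attributing strong unforgeability to the $H_2$ mechanism brings your sketch back in line with the statement; the rest of your account of the simulation and online extraction is consistent with Unruh's argument.
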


Since $\gaids$ has these properties, we instantiate $\unruhsign$ with $\gaids$
and call the resulting signature scheme $\gasign$.
\begin{corollary} If Assumption~\ref{asn:owa} holds, $\gasign$ is a
  strongly unforgeable signature in \QRO.
  \label{cor:owa_sign}
\end{corollary}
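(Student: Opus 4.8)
The plan is to obtain $\gasign$ as a black-box instantiation of the Unruh transformation and to discharge its hypotheses using the properties already established for $\gaids$. Since $\gasign = \unruhsign[\gaids]$, by Theorem~\ref{thm:id-sign} it suffices to check that $\gaids$ enjoys correctness, HVZK, and computational special soundness (Definition~\ref{def:sigma-prop}). Correctness and HVZK are immediate from Theorem~\ref{thm:gaids}. Computational special soundness is exactly the content of the special-soundness bullet in the proof of Theorem~\ref{thm:gaids}: given two accepting transcripts $(I,c,r)$ and $(I,c',r')$ with $c\neq c'$, there is a coordinate $i\in[\prep]$ with $c_i\neq c_i'$, and the corresponding responses are $h_i$ and $h_i g_i^{-1}$, from which the extractor recovers $g_i$ with $\alpha(g_i,s_i)=t_i$. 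I would first promote that sketch to the precise extractor $\mathcal{E}$ demanded by Definition~\ref{def:sigma-prop}, and record that the side conditions (statistical soundness, unpredictable commitment) needed by Unruh's analysis are likewise furnished by Theorem~\ref{thm:gaids}.

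Next I would make explicit where Assumption~\ref{asn:owa} enters. Unruh's reduction turns a forger into an algorithm that, via measure-and-reprogram rewinding in the \QRO, produces two accepting transcripts sharing a commitment but differing in challenge, and then applies $\mathcal{E}$ to extract a witness for the challenge public key. This yields strong unforgeability precisely when such a witness is infeasible to compute, and here the relation $R$ underlying $\gaids$ is nothing but $\prep$ independent copies of the $\gainv$ problem: a pair $(\{g_i\},\{(s_i,t_i)\})$ with $\alpha(g_i,s_i)=t_i$ for all $i$, where each $(s_i,t_i)$ is generated as in the $\gainv$ game. Thus the extracted witness would solve a $\gainv$ instance, contradicting Assumption~\ref{asn:owa}; chaining this contradiction through Theorem~\ref{thm:id-sign} gives the corollary.

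The main obstacle is reconciling the \emph{partial} extraction we actually get with the \emph{full}-secret-key form of Definition~\ref{def:sigma-prop}: the two colliding transcripts differ only on the coordinates $i$ with $c_i\neq c_i'$, so $\mathcal{E}$ recovers $g_i$ only for those coordinates rather than the entire $sk=\{g_i\}_{i=1}^{\prep}$. I would resolve this by exploiting that the $\prep$ sub-instances are sampled independently, so that recovering a single component $g_i$ already solves one genuine $\gainv$ instance and hence breaks \OWA; formally I would either relax the witness relation so that any single valid component counts as a valid secret key, or adapt Unruh's reduction to target one coordinate, arguing that a forger inducing a challenge collision does so on some fixed coordinate with probability at least $1/\prep$. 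Verifying that this single-coordinate witness suffices---and that the $1/\prep$ loss together with the \QRO rewinding losses remains polynomial---is the delicate part; the remaining steps are routine once the extractor and the hardness instance are aligned.
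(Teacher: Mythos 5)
Your proposal follows exactly the paper's route: the paper offers no separate proof of Corollary~\ref{cor:owa_sign}, treating it as an immediate instantiation of Theorem~\ref{thm:id-sign} once Theorem~\ref{thm:gaids} supplies correctness, HVZK, and special soundness for $\gaids$. The partial-extraction issue you flag is real---the paper's own special-soundness argument silently recovers only the components $g_i$ with $c_i\neq c_i'$ rather than the full $sk$---and your fix (relaxing the witness relation so that a single valid component counts, which still suffices to break one $\gainv$ instance and hence Assumption~\ref{asn:owa}) is the standard and correct way to close that gap.
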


\section{Quantum-secure primitives from the pseudorandom action
  assumption}
\label{sec:quantum_pra}

\subsection{Improved Digital signature based on \PRA via Fiat-Shamir}
\label{subsec:pra_sign}

In this subsection, we show that if we accept the possibly stronger
assumption of \PRA, we can apply the standard Fiat-Shamir transform
to $\gaids$, and obtain a more efficient signature scheme in
\QRO. This is due to a recent work by
Unruh~\cite{UnruhAC17}\footnote{\cite{KLS18} includes a similar
  result, which is primarily tailored to lattice-based identification
  schemes.}, where he shows that if one can equip $\ids$ with a
``dual-mode'' key generation, Fiat-Shamir will indeed work in \QRO. A
dual-mode key is a fake public key $\dualkey$ that is
indistinguishable from a valid public key. Nonetheless, $\dualkey$ has
no corresponding secret key (i.e., $\dualkey\notin L_R$).

\begin{definition}[Dual-mode key generator, adapting~\cite{UnruhAC17}]
  An algorithm $\kg$ is a \emph{dual-mode key generator} for a relation
  $R$ iff.
  \begin{itemize}[label=$\bullet$]
  \item $\kg$ is quantum polynomial-time,
  \item $\Pr \bigl[ (sk,pk)\in R: (sk,pk)\gets \kg(\usecpar) \bigr] \geq 1 -
    \negl(\secpar)$.
  \item for all quantum polynomial-time algorithm $\attack$, there is a quantum
    polynomial-time algorithm $\dkg$ such that
    \begin{equation*}
      \Bigl| \Pr \bigl[ \attack(pk) = 1: (sk,pk)\gets \kg(\usecpar) \bigr] -
      \Pr \bigl[ \attack(\dualkey)=1: \dualkey \gets \dkg(\usecpar) \bigr]
      \Bigr| \leq \negl(\secpar),
    \end{equation*}
    and
    \begin{equation*}
      \Pr \Bigl [ \dualkey \in L_R: \dualkey \gets \dkg(\usecpar)
      \Bigr] \leq \negl(\secpar).
    \end{equation*}
  \end{itemize}
  \label{def:dual-gen}
\end{definition}

\begin{theorem} $\kg$ in $\gaids$ is a dual-mode key generator, if
  Assumption~\ref{asn:pra} holds.
  \label{thm:ga-dual-gen}
\end{theorem}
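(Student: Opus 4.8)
The plan is to construct the alternate key generator $\dkg$ and then verify the three requirements of Definition~\ref{def:dual-gen} in turn. The natural choice for $\dkg(\usecpar)$ is to mimic $\kg$ over all $\prep$ repetitions but replace each genuine action-image by a fresh random element: sample $s_i \gets S$ and $t_i \gets S$ using $R_S$, and output $\dualkey := \{(s_i, t_i)\}_{i=1}^{\prep}$. This is manifestly quantum (indeed classical) polynomial-time, so the first bullet holds. The correctness requirement on $\kg$ (the second bullet) is immediate: since $\kg$ sets $t_i = \alpha(g_i, s_i)$ for each $i$, the generated $(sk, pk)$ always satisfies the relation $R$, so $\Pr[(sk,pk)\in R] = 1$.

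For the requirement that $\dualkey \notin L_R$ with overwhelming probability, observe that $\dualkey = \{(s_i, t_i)\}_{i=1}^{\prep} \in L_R$ holds only if there exists a secret key $\{g_i\}$ with $t_i = \alpha(g_i, s_i)$ for every $i$, that is, only if \emph{every} pair $(s_i, t_i)$ lies in a common orbit. Assumption~\ref{asn:pra} guarantees that $\pg$ outputs an action without a dominant orbit, so by Definition~\ref{def:dominant} a single pair of independently sampled elements lies in the same orbit with only negligible probability. Hence $\Pr[\dualkey \in L_R] \le \Pr[(s_1, t_1) \text{ in the same orbit}] = \negl(\secpar)$, establishing the last condition.

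The substantive part is the indistinguishability of $pk$ and $\dualkey$, which I would prove by a standard hybrid argument reducing to the hardness of $\gapr$. Define hybrids $\mathcal{H}_0, \dots, \mathcal{H}_{\prep}$, where in $\mathcal{H}_j$ the first $j$ coordinate pairs are sampled as in $\dkg$ (with $t_i \gets S$ uniform) and the remaining $\prep - j$ pairs are sampled as in $\kg$ (with $t_i = \alpha(g_i, s_i)$). Then $\mathcal{H}_0$ is the distribution of $pk$ and $\mathcal{H}_{\prep}$ is the distribution of $\dualkey$. Suppose some quantum polynomial-time $\attack$ distinguishes $\mathcal{H}_0$ from $\mathcal{H}_{\prep}$ with advantage $\delta$; by the triangle inequality it distinguishes $\mathcal{H}_{j-1}$ from $\mathcal{H}_j$ with advantage at least $\delta/\prep$ for some $j$. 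I would turn this into a $\gapr$ distinguisher $\mathcal{B}$: on a challenge $(s^*, t^*)$, where $t^* = \alpha(g, s^*)$ or $t^* \gets S$, the reduction $\mathcal{B}$ places $(s^*, t^*)$ in coordinate $j$, generates coordinates $1, \dots, j-1$ as random pairs and coordinates $j+1, \dots, \prep$ as action pairs using freshly sampled $s_i, g_i$ (all doable with $R_S$, $R_G$, and $\alpha$), assembles the resulting public key, and runs $\attack$ on it. When $t^*$ is an action-image, $\mathcal{B}$ perfectly simulates $\mathcal{H}_{j-1}$; when $t^*$ is uniform, it simulates $\mathcal{H}_j$. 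Thus $\mathcal{B}$ wins $\gapr$ with advantage $\delta/\prep$, and since $\prep = \poly(\secpar)$, Assumption~\ref{asn:pra} forces $\delta = \negl(\secpar)$.

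I expect the only delicate point to be confirming that $\mathcal{B}$ is genuinely quantum polynomial-time and runs $\attack$ as a black box without disturbing its quantum state: the hybrid index and the embedding are chosen classically, and the challenge pair is inserted into a fixed coordinate, so $\mathcal{B}$ invokes $\attack$ exactly once on a classically-prepared public key, and the standard hybrid bound carries over to quantum distinguishers verbatim. Everything else is bookkeeping over the $\prep$ parallel repetitions.
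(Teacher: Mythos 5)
Your proposal is correct and follows essentially the same route as the paper: replace each action-image $t_i=\alpha(g_i,s_i)$ by a uniform $t_i\gets S$, invoke $\gapr$ hardness for the indistinguishability bullet, and use the absence of a dominant orbit for $\Pr[\dualkey\in L_R]=\negl(\secpar)$. If anything your writeup is more complete than the paper's: the paper's $\dkg$ is described as outputting a single pair and the indistinguishability is asserted directly from Assumption~\ref{asn:pra}, whereas you correctly handle the $\prep$-fold repetition via the hybrid reduction, and your appeal to Definition~\ref{def:dominant} for the last bullet is cleaner than the paper's $|G|/|S|$ estimate.
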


\begin{proof} We construct $\dkg$ as follows:

  \begin{enumerate}
  \item choose $(G,S,\alpha)$ to be $\pg(\usecpar)$;
  \item sample $s,t\gets S$ uniformly;
  \item output $\dualkey:= (s,t)$.
  \end{enumerate}

  By Assumption~\ref{asn:pra}, it follows that
  \begin{equation*}
    \Bigl| \Pr \bigl[ \attack(pk) = 1: (sk,pk)\gets \kg(\usecpar) \bigr] -
    \Pr \bigl[ \attack(\dualkey)=1: \dualkey\gets \dkg(\usecpar) \bigr] \Bigr|
    \leq \negl(\secpar)\, .
  \end{equation*}

  In addition,
  \begin{equation*}
    \Pr \bigl[\, pk\in L_R: pk \gets \dkg(\usecpar) \,\bigr] = \frac{|G|}{|S|}
    \leq \negl(\secpar). \qedhere
  \end{equation*}

\end{proof}

\begin{figure}[htb!]
  \centering
  \begin{protocol}
    \ul{Fiat-Shamir transformation}\\[1em]
    Let $\secpar$ be the security parameter, $\lencom,\lench,\lenr$ be the
    length of the commitment, challenge and response respectively.
    Choose a hash function $H: \bit^* \to \bit^{\lenr}$.
    Given an identification scheme $\ids$, we construct a signature scheme
    $\fssign$ as follows:
    \begin{itemize}[label=$\bullet$]
    \item $\skg$: run $(sk, pk) \gets \kg(\usecpar)$. Output $(sk,pk)$.
    \item $\sign$: for input message $m$ and $sk$,
      \begin{enumerate}[label=\roman*)]
      \item compute $I \gets \prover(sk,pk)$, $c = H(pk,m,I)$, and
        $r\gets \prover(sk,pk,I,c)$.
      \item output $\sigma: = I \| r$.
      \end{enumerate}
    \item $\vrfy$: for input $pk$ and $(m,\sigma)$,
      \begin{enumerate}[label=\roman*)]
      \item parse $\sigma$ as $I \| r$.
      \item compute $c := H(pk,m,I)$.
      \item accept if $\verifier(I,c,r) = 1$.
      \end{enumerate}
    \end{itemize}
  \end{protocol}
  \caption{Fiat-Shamir transformation}
  \label{fig:fiat-shamir}
\end{figure}

\begin{theorem}(Adapting~\cite[Corollary 33]{UnruhAC17}) If an identification
  scheme $\ids$ has correctness, HVZK, statistical soundness, unpredictable
  commitments, $\lench$ is superlogarithmic, and $\kg$ is a dual-mode key
  generator.
  Then in \QRO, $\fssign$ obtained from Fiat-Shamir transform (Construction in
  Fig.~\ref{fig:fiat-shamir}) is weakly unforgeable.
  If $\ids$ has unique responses, the signature scheme is strongly unforgeable.
  \label{thm:dual-fs}
\end{theorem}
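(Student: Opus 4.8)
The plan is to reduce the claim to Unruh's Corollary~33 in~\cite{UnruhAC17} by checking that the bundle of hypotheses we impose on $\ids$ matches the ones required there; the substantive quantum random-oracle machinery is then imported rather than reproved. The argument is organized as a short sequence of games starting from the weak-unforgeability experiment of Definition~\ref{def:uf} instantiated with $\fssign$.

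First I would eliminate the secret key from the signing oracle. Since a signature is a transcript $(I, c, r)$ with the challenge fixed as $c = H(pk, m, I)$, I replace genuine signing by the HVZK simulator $\hvsim$ and adaptively reprogram the quantum random oracle $H$ at $(pk, m, I)$ to output the simulated challenge. Correctness and HVZK make the simulated transcripts (statistically) indistinguishable from honest ones, while the unpredictable-commitment property guarantees that the reprogrammed inputs are fresh except with negligible probability, so that a quantum adversary detects the reprogramming with only negligible advantage. This is exactly the step where Unruh's reprogramming and one-way-to-hiding lemmas for the QRO enter, and I would cite them rather than re-derive them.

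Next, with signing now performed without $sk$, I would invoke the dual-mode property of $\kg$ (Definition~\ref{def:dual-gen}, which holds under Assumption~\ref{asn:pra} by Theorem~\ref{thm:ga-dual-gen}) to swap the honestly generated $pk$ for a fake key $\dualkey \gets \dkg(\usecpar)$; the indistinguishability clause bounds the resulting change in the forging probability by $\negl(\secpar)$. Because $\dualkey \notin L_R$ except with negligible probability, any forgery in this final game yields an accepting transcript relative to a public key outside $L_R$. Statistical soundness then bounds this event by $\negl(\secpar)$, using that $\lench$ is superlogarithmic so that the per-challenge cheating probability amplifies to negligible. Chaining the hops gives weak unforgeability.

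For strong unforgeability I would fold the unique-response property into the last game: a forgery whose message–signature pair lies in $\signlist$ but differs from what the signing oracle returned must, after the reprogramming bookkeeping, exhibit two distinct accepting responses $r \neq r'$ on a common $(pk, I, c)$, which the unique-response condition forbids except with negligible probability. The main obstacle throughout is the familiar one for Fiat–Shamir against quantum adversaries, namely reprogramming $H$ at adaptively chosen points that are queried in superposition while keeping the adversary's detection advantage negligible; but this is precisely the content packaged in Unruh's corollary, so once the hypotheses are matched the remaining work is bookkeeping.
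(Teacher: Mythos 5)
Your proposal is correct and matches the paper's treatment: the paper states this theorem as a direct adaptation of Unruh's Corollary~33 and supplies no proof of its own, so checking that the hypotheses (correctness, HVZK, statistical soundness, unpredictable commitments, superlogarithmic $\lench$, the dual-mode property of $\kg$, and unique responses for the strong variant) line up with Unruh's is exactly what is required. Your game-hopping sketch of what sits inside the cited corollary --- simulate-and-reprogram the quantum random oracle using HVZK and unpredictable commitments, then swap $pk$ for $\dualkey$, then kill forgeries via statistical soundness (where the final step really rests on a quantum generic-search bound over the oracle's challenge outputs, not on soundness applied naively) --- is accurate and, if anything, more detailed than what the paper provides.
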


Since $\gaids$ has all these properties, we can instantiate Construction
$\fssign$ with $\gaids$.
Call the resulting signature scheme $\gafssign$.

\begin{corollary} $\gafssign$ is strongly unforgeable, if
  Assumption~\ref{asn:pra} holds.
  \label{cor:ga-fs-sign}
\end{corollary}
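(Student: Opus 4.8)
The plan is to obtain the statement purely as an instantiation of the generic Fiat--Shamir result, Theorem~\ref{thm:dual-fs}, applied to the identification scheme $\gaids$. Since $\gafssign$ is by definition $\fssign$ built on $\ids=\gaids$, it suffices to verify that $\gaids$ meets every hypothesis of Theorem~\ref{thm:dual-fs} --- correctness, HVZK, statistical soundness, unpredictable commitments, a superlogarithmic challenge length $\lench$, a dual-mode key generator $\kg$, and (for the \emph{strong} conclusion) unique responses --- and then invoke the theorem. First I would record that, since \PRA is stronger than \OWA, Assumption~\ref{asn:pra} implies Assumption~\ref{asn:owa}, so all conclusions of Theorem~\ref{thm:gaids} are available.

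Four of the hypotheses are then immediate. Correctness, HVZK, statistical soundness, and unpredictable commitments are exactly the properties established for $\gaids$ in Theorem~\ref{thm:gaids}. The superlogarithmic challenge length is a matter of inspecting the construction: $\gaids$ runs the base protocol in parallel $\prep=\omega(\log\secpar)$ times, so its challenge space is $\bit^{\prep}$ and $\lench=\prep=\omega(\log\secpar)$ as required. The dual-mode property of $\kg$ is precisely Theorem~\ref{thm:ga-dual-gen}, and this is the single place where the \PRA assumption is actually consumed: indistinguishability of a genuine key $(s,\alpha(g,s))$ from a random pair $(s,t)$ is exactly $\gapr$, while $\dualkey\notin L_R$ with overwhelming probability follows from the no-dominant-orbit condition (the probability being $|G|/|S|$). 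Having checked these, weak unforgeability already follows from Theorem~\ref{thm:dual-fs}.

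The main obstacle is the remaining hypothesis, \textbf{unique responses}, which is what upgrades weak to strong unforgeability and is \emph{not} among the properties listed in Theorem~\ref{thm:gaids}. Here one must show it is infeasible, given $(pk,I,c)$, to produce two distinct accepting responses. For $\gaids$ this reduces to the base protocol, where an accepting response to challenge $c=0$ (resp.\ $c=1$) is any group element sending $s$ (resp.\ $t$) to $I$; two such responses differ by a nontrivial element of $\mathrm{Stab}(s,G)$ (resp.\ $\mathrm{Stab}(t,G)$). Thus unique responses holds exactly when these stabilizers are trivial, i.e.\ when the action is (almost) free --- the same ``(almost) injective'' condition invoked for unpredictable commitments. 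The delicate point for \GLAT is that the stabilizer is \emph{never} trivial: the scalar triples $(\lambda_1 I,\lambda_2 I,\lambda_3 I)$ with $\lambda_1\lambda_2\lambda_3=1$ fix every tensor, so any response can be freely rescaled and strong unforgeability fails verbatim.

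I would address this either by restricting to group actions with trivial stabilizers (so the element sending $s$ to $I$ is unique and responses are literally unique), or, for \GLAT specifically, by canonicalizing responses modulo the scalar torus and checking that the verifier and simulator descend to the corresponding quotient action without disturbing the other properties. The genuine difficulty --- and the crux of the proof --- is that the adversary chooses $pk$, so the argument must control stabilizers for \emph{all} admissible public keys rather than merely for random ones. Once unique responses is secured in this form, Theorem~\ref{thm:dual-fs} yields strong unforgeability of $\gafssign$ in \QRO, completing the proof.
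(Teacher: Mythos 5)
Your proposal takes exactly the route the paper does: the paper's entire argument for Corollary~\ref{cor:ga-fs-sign} is the single sentence ``Since $\gaids$ has all these properties, we can instantiate Construction $\fssign$ with $\gaids$,'' i.e.\ an appeal to Theorem~\ref{thm:dual-fs} with the hypotheses discharged by Theorem~\ref{thm:gaids} (correctness, statistical soundness, HVZK, unpredictable commitments), by inspection ($\lench=\prep=\omega(\log\secpar)$), and by Theorem~\ref{thm:ga-dual-gen} (dual-mode key generation, the only place \PRA is consumed). Up to that point your write-up matches the paper step for step.

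Where you diverge is in flagging the \textbf{unique responses} hypothesis, and here you have found a genuine gap \emph{in the paper}, not in your own argument. Theorem~\ref{thm:gaids} does not assert unique responses, and nothing elsewhere in the paper establishes it for $\gaids$; Theorem~\ref{thm:dual-fs} without it yields only \emph{weak} unforgeability, yet the corollary claims strong unforgeability. Your diagnosis of why the property is problematic is also correct: an accepting response to $c=0$ (resp.\ $c=1$) is \emph{any} group element carrying $s$ (resp.\ $t$) to $I$, so two accepting responses differ by a stabilizer element, and for \GLAT the stabilizer of every tensor contains the torus $\{(\lambda_1 I,\lambda_2 I,\lambda_3 I):\lambda_1\lambda_2\lambda_3=1\}$, which is nontrivial whenever $q>2$; hence an adversary can rescale any valid response into a distinct valid one and unique responses fails outright. (This sits in tension with the ``(almost) injective'' claim used for unpredictable commitments, which concerns collisions of $\alpha(g,s)$ for independent random $g,g'$ and is a much weaker statement than freeness.) Your proposed repairs --- restricting to actions with trivial stabilizers, or quotienting responses by the scalar torus and checking the other properties survive --- are the natural ways to actually obtain the strong conclusion; as written, the paper's hypotheses only support weak unforgeability of $\gafssign$.
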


Note that $\gafssign$ is much more efficient than $\gasign$. In
particular, $\gafssign$ only invokes the underlying
$(\prover,\verifier)$ once as opposed to superpolylogarithmic times in
$\gasign$.

\subsection{Quantum-secure pseudorandom functions based on \PRA}
\label{subsec:pra_prf}

Finally, we discuss how to construct quantum-secure
pseudorandom functions using the $\PRA$
assumption.  Basically, we will show that we can instantiate the GGM
construction~\cite{Goldreich:1986aa} using the $\PRA$ assumption.  To do this,
we need to first discuss constructing pseudorandom generators.

\paragraph{(Keyed) pseudorandom generators.}
We already have mentioned that we can construct a PRG $\Gamma:S\times G\to S\times
S$, given by
\[
\Gamma(s, g):=(s, g\cdot s).
\]

In fact, we may modify this construction slightly to obtain a form of PRG with
much better stretching almost for free as follows.  For $s\in S$, we define
$\Gamma_s:G\to S$ by
\[
\Gamma_s(g):=g\cdot s.
\]

This can be considered as a `keyed PRG', where $s$ is a public key for the PRG
instance $\Gamma_s$, and this instance stretches the seed $g\gets G$ to $g\cdot
s$.  Such notion of a keyed PRG is informally given in~\cite{Huang:2012aa}, but
surely this notion was used implicitly in many works previously.  We may give a
formal definition of this notion as follows.
\begin{definition}[Keyed PRG]
A \emph{keyed pseudorandom generator}, or a keyed PRG, is a pair of probabilistic
polynomial-time algorithms $(\kg, \prg)$:
\begin{itemize}
\item Key generator: $k\gets\kg(1^\lambda)$ generates a \emph{public} key
$k\in\cK$ describing an instance of the keyed PRG.
\item Pseudorandom generator: given $k$ sampled by $\kg(1^\lambda)$,
$\prg_k:\cX\to\cY$ stretches a uniform element $x\gets\cX$ to produce an element
$\prg_k(x)\in\cY$.  Note that this $\prg$ algorithm is required to be
deterministic.
\end{itemize}
In the above, $\cK$ is the key space of the keyed PRG, and $\cX$, $\cY$ are the
domain and the codomain of the keyed PRG, respectively.  They are implicitly
parametrized by the main parameter $\lambda$.  Also, it is required that
$|\cY|>|\cX|$.
\end{definition}

\begin{definition}[Security of a keyed PRG]
  We say that a keyed PRG, $\Gamma=(\kg, \prg)$, is \emph{secure}, if for any
  quantum polynomial-time adversary $\attack$, we have
  \begin{equation}
    \begin{split}
      \adv_\Gamma^{\mode{prg}}(\attack) & := \Bigl| \Pr \bigl[
      \attack(k, \prg_k(x))=1 \,:\, x\gets\cX, k\gets\kg(1^\lambda) \bigr] \\
      &\qquad -\Pr \bigl[ \attack(k, y)=1 \,:\, y\gets\cY, k\gets\kg(1^\lambda)
      \bigr] \Bigr| \leq\negl(\lambda).
    \end{split}
  \end{equation}
\end{definition}

Again, it is immediate that $\PRA$ assumption implies that $g\mapsto g\cdot s$
is a secure keyed PRG, where $s$ is the key and $g$ is the seed.

\paragraph{Doubling keyed PRGs.}

The keyed PRG $\Gamma_s$ that we have described above is of form $\Gamma_s:G\to
S$.  While $|S|\gg|G|$, having $S$ which might `look different' from $G$ can be
inconvenient for some applications, for example, constructing a PRF via the GGM
construction.  So, here we would like to construct a `doubling' keyed PRG out of
the previous construction, using randomness extraction.

The idea is simple: $\Gamma_s(g)$ would look uniform random over $S$ for average
$s$, so we can use a randomness extractor to produce a pseudorandom bit string of
enough length, and use that to sample two group elements of $G$.  Overall, the
construction would be of form $G\to G\times G$, while the PRF key would include
not only the point $s\in S$ but also the random seed for the randomness
extraction.  For concreteness, we may use the Leftover Hash Lemma
(LHL)~\cite{HILL99}, but in fact any strong randomness extractor would be all
right.

More concretely, let $R_G:\{0,1\}^p\to G$ be the sampling algorithm for the group
$G$ which samples a random element of $G$, (statistically close to) uniform.  Note
that this $R_G$ is required for our group $G$.  In fact, Babai~\cite{Babai:1991aa}
gives an efficient Monte Carlo algorithm for sampling a group element of a finite
group in a very general setting which is applicable to all of our instantiations.

Let $\mathcal{H}=\{h:S\to\{0,1\}^r\}$ be a family of 2-universal hash functions,
where $r$ is sufficiently smaller than $\log |S|$.  LHL implies, informally, that
$(h, h(s))$ and $(h, u)$ are statistically indistinguishable, when
$h\gets\mathcal{H}$, $s\gets S$, $u\gets \{0,1\}^r$ are uniform and independent.
Let us assume $\log |S|$ is large enough so that we can take $r=2p$.

Then, we may construct a doubling keyed PRG $(\kg, \prg)$ as follows:
\begin{itemize}
\item  Choose public parameters $\params(G, S, \alpha)$ to be $\pg(\usecpar)$.
\item Key generator: $\kg(1^\lambda)$ samples $s\gets S$, $h\gets\mathcal{H}$,
  and outputs $k:=(s, h)$.
\item Pseudorandom generator: $\prg_k(g):=(R_G(r_0), R_G(r_1))$, where $r_0$ and
  $r_1$ are the left half and the right half of $h(g\cdot s)$, respectively.
\end{itemize}
In short, this keyed PRG stretches the seed $g\gets G$ to $g\cdot s$, and extracts
a pseudorandom bit string of length $2p$, and use that to sample two
independent-looking group elements.  The security of this construction comes from
the $\PRA$ assumption and the Leftover Hash Lemma.

\paragraph{Pseudorandom functions.}

Of course, the notion of a PRF~\cite{Goldreich:1986aa} is well-known and
well-established.  Here, following Maurer and Tessaro~\cite{MT08}, we are going to
extend the notion of PRF somewhat so that it may also have an extra `public key'
part.

\begin{definition}[Pseudorandom function]
  A pseudorandom function (PRF) is a polynomial-time computable function $f$ of
  form $f:\cP\times\cK\times\cX\to\cY$.  We call the sets $\cP$, $\cK$, $\cX$,
  $\cY$ as the public-key space, the key space, the domain, and the codomain of
  $f$, respectively.

  We would often write $f(p, k, x)$ as $f_p(k, x)$.
\end{definition}
Note that we may regard an `ordinary' PRF as a special case of above where it has
a trivial, empty public key.

In this paper, we consider quantum-secure PRFs~\cite{Zhandry:2012ac}, or,
sometimes called QPRFs, whose security is defined as follows.

\begin{definition}[Security of a PRF]
  Let $f:\cP\times\cK\times\cX\to\cY$ be a PRF.
  We say that $f$ is \emph{quantum-secure}, if for any quantum polynomial-time
  adversary $\attack$ which can make quantum superposition queries to its oracle,
  we have the following:
  \begin{equation*}
    \adv^{\mode{prf}}_f(\attack) := \Bigl| \Pr \bigl[
    \attack^{f_p(k, \cdot)}(p)=1 \bigr]
    - \Pr \bigl[ \attack^{\rho}(p)=1 \bigr] \Bigr| = \negl(\secpar),
  \end{equation*}
  where $p\gets\cP$, $k\gets\cK$, $\rho\gets\cY^\cX$ are uniformly and
  independently random and $\secpar$ is the security parameter.
\end{definition}

Suppose we have a secure, doubling keyed PRG $\Gamma=(\kg, \prg)$ where $\prg_s$
is of form
\[
\prg_s:\cK\to\cK\times\cK.
\]
Writing the first component and the second component of $\prg_s(k)$ as $f_s(k, 0)$
and $f_s(k, 1)$, we obtain a PRF $f$ of form
\[
f:\cS\times\cK\times\{0,1\}\to \cK.
\]
Here, $\cS$ is the public-key space of $f$, which is the key space of the keyed
PRG $\Gamma$.  The key space of $f$ is $\cK$, and the domain and the codomain of
$f$ are $\{0,1\}$ and $\cK$, respectively.

Moreover, we can immediately see that the security of the one-bit PRF $f$ is
exactly equivalent to the security of $\Gamma$ as a keyed PRG.  In fact, we can
say that the security of $f$ is just a re-statement of the security of $\Gamma$.

Now we may apply the GGM construction to $f$ to define the following PRF
$\ggm[f]:\cS\times\cK\times\{0,1\}^l\to\cK$, where
\[
\ggm[f]_s(k, x_1\dots x_l):= f_s(\dots f_s(f_s(k, x_1), x_2), \dots, x_l).
\]

In fact, the above is the same as the cascade construction for the one-bit PRF $f$.

When we instantiate the GGM construction using an ordinary PRG, or when we
instantiate the cascade construction using an ordinary PRF (without the public-key
part), the quantum security is already
established~\cite{Zhandry:2012ac,Song:2017aa}.  The only difference is that here
we instantiate the construction using a keyed PRG, or, equivalently, a one-bit PRF
with a public key.

Following~\cite{Zhandry:2012ac} or~\cite{Song:2017aa}, we can define a version of
oracle security for such a PRF with a public key.

\begin{definition}[Oracle security of a PRF]
  Let $f:\cP\times\cK\times\cX\to\cY$ be a PRF.  We say that $f$ is
  \emph{oracle-secure with respect to an index set $\cI$}, if for any quantum
  polynomial-time adversary $\attack$ which can make quantum superposition queries to its
  oracle, we have the following:
  \begin{equation*}
    \adv^{\mode{os-prf}}_{f, \cI}(\attack) := \Bigl| \Pr \bigl[
    \attack^{O_0}(p)=1 \bigr] - \Pr \bigl[ \attack^{O_1}(p)=1
    \bigr]\Bigr|=\negl(\secpar),
  \end{equation*}
  where the oracles $O_0, O_1$ are defined as
  \begin{align*}
    O_0(i, x):=f_p(\kappa(i), x),\quad  O_1(i, x)&:=\rho(i, x),
  \end{align*}
  and $p\gets\cP$, $\kappa\gets\cK^\cI$, $\rho\gets\cY^{\cI\times\cX}$ are chosen
  uniform randomly and independently.
\end{definition}

And, as in~\cite{Song:2017aa}, we show that if a PRF with a public key is secure,
then it is also oracle-secure.

\begin{theorem}
Let $f:\cP\times\cK\times\cX\to\cY$ be a PRF.  Suppose that it is secure as a
PRF.  Then, it is also oracle-secure.
\end{theorem}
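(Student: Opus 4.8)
The plan is to reduce \emph{oracle}-security to ordinary PRF security by a hybrid argument over the ``instances'' selected by the index set $\cI$, using Zhandry's quantum toolkit~\cite{Zhandry:2012ac} to cope with the fact that $\cI$ may be exponentially large while $\attack$ queries its oracle in superposition. Fix a QPT adversary $\attack$ making $q=\poly(\secpar)$ queries, and let $\veps$ denote the maximal advantage of a QPT adversary against $f$ in the plain PRF game, which is negligible by hypothesis. The goal is to bound $\adv^{\mode{os-prf}}_{f,\cI}(\attack)$ by a negligible quantity.

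First I would collapse the key assignment to a small range. In world $O_0$ the oracle is $(i,x)\mapsto f_p(\kappa(i),x)$ with $\kappa\gets\cK^{\cI}$ uniform. Each such query can be implemented by a single query to $\kappa$ followed by the public map $f_p(\cdot,x)$, so by the small-range distribution lemma of~\cite{Zhandry:2012ac}, replacing the uniform $\kappa$ by a small-range assignment $\kappa'(i)=k_{P(i)}$ — where $k_1,\dots,k_\ell\gets\cK$ and $P\gets[\ell]^{\cI}$ — changes $\Pr[\attack=1]$ by at most $O(q^3/\ell)$. After this step the oracle reads $(i,x)\mapsto f_p(k_{P(i)},x)$, involving only $\ell$ distinct keys. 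Symmetrically, a uniform $\rho\gets\cY^{\cI\times\cX}$ is, per index, a uniform draw from $\cY^{\cX}$, so the same lemma (with ``alphabet'' $\cY^{\cX}$) shows $O_1$ is $O(q^3/\ell)$-close to the small-range output world $(i,x)\mapsto R_{P(i)}(x)$, where $R_1,\dots,R_\ell\gets\cY^{\cX}$ and $P$ is the same random assignment.

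Next I would interpolate between these two small-range worlds by a bucket-wise hybrid. For $0\le j\le\ell$, let $H_j$ answer a query with index $i$ by $R_{P(i)}$ if $P(i)\le j$ and by $f_p(k_{P(i)},\cdot)$ if $P(i)>j$; thus $H_0$ is the small-range key world and $H_\ell$ the small-range output world. To bound $|\Pr[\attack^{H_{j-1}}=1]-\Pr[\attack^{H_j}=1]|$ I would build a QPT reduction $\cB_j$ to the plain PRF game: it receives $p$ and an oracle that is either $f_p(k,\cdot)$ or a uniform $\cY^{\cX}$ function, simulates the other buckets itself, and answers $\attack$'s query $(i,x)$ by computing $P(i)$ and routing to a random function (buckets $<j$), to its own challenge oracle (bucket $j$), or to $f_p(k_{P(i)},\cdot)$ (buckets $>j$). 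Since $\attack$ makes only $q$ queries, $\cB_j$ may represent $P$, the random functions on buckets $<j$, and the keys on buckets $>j$ by \emph{$2q$-wise independent} functions, which are efficiently evaluable in superposition and perfectly indistinguishable from truly random choices against $q$ quantum queries~\cite{Zhandry:2012ac}; hence $\cB_j$ is QPT and the gap between $H_{j-1}$ and $H_j$ is at most $\veps$. Telescoping gives $\ell\veps$; this is a sum of reals, so it is legitimate even when $\ell$ is super-polynomial, provided each $\cB_j$ is individually QPT.

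Collecting the three bounds yields
\begin{equation*}
  \adv^{\mode{os-prf}}_{f,\cI}(\attack)\;\le\; O\!\left(\tfrac{q^3}{\ell}\right)+\ell\,\veps+O\!\left(\tfrac{q^3}{\ell}\right),
\end{equation*}
and the last move is to optimize the free parameter: choosing $\ell\approx\sqrt{q^3/\veps}$ balances the two error types and gives a bound of order $\poly(q)\cdot\sqrt{\veps}$, which is negligible since $\veps$ is negligible and $q=\poly(\secpar)$. The main obstacle, and the only place care is genuinely needed, is precisely this tension: $\cI$ is far too large for a naive per-index hybrid (exponentially many steps, and one cannot union-bound over superposition queries), yet forcing the small-range error to be negligible drives $\ell$ super-polynomial, so the bucket hybrid itself has super-polynomially many steps. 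Both difficulties are dissolved together by the balancing above, using that the hybrid telescope is a sum over reals rather than a union bound and that each $\cB_j$ is QPT regardless of $\ell$. This mirrors the argument of~\cite{Song:2017aa}, adapted to the present PRF-with-public-key setting.
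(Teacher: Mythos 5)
Your argument is correct, but it takes a more self-contained route than the paper does. The paper's own proof is essentially a two-line reduction to a black-box lemma: it models the shared public key $p$ as the output of a trivial random oracle $H:\{\ast\}\to\cP$, observes that PRF security of $f$ is exactly indistinguishability of the two function distributions $D_0$ (a random $f_p(k,\cdot)$) and $D_1$ (a uniform function) \emph{relative to} $H$, and then invokes the theorem of Song--Yun that relative indistinguishability implies relative \emph{oracle}-indistinguishability, which is the statement to be proved. What you have done is re-derive the content of that cited theorem from first principles: the small-range collapse of both worlds to $\ell$ buckets at cost $O(q^3/\ell)$ each, the bucket-wise hybrid with the challenge oracle embedded in bucket $j$ and everything else simulated via $2q$-wise independent functions, and the final balancing $\ell\approx\sqrt{q^3/\veps}$ yielding $\poly(q)\cdot\sqrt{\veps}$. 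This is precisely how the Song--Yun lemma (and Zhandry's underlying oracle-indistinguishability theorem) is proved, so the two arguments coincide at the mathematical core; your version buys self-containedness and an explicit concrete-security bound, while the paper's buys brevity and reuses an abstraction that also handles the shared public key cleanly.

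One point deserves tightening. You telescope the $\ell$ hybrid gaps as $\ell\veps$ where $\veps$ is ``the maximal advantage of a QPT adversary,'' and justify the super-polynomial sum by saying it is a sum of reals with each $\cB_j$ individually QPT. As stated this is slightly loose: a sum of super-polynomially many individually negligible terms need not be negligible, and ``the maximal advantage of a QPT adversary'' is not a single well-defined negligible function. The standard repair, which Zhandry and Song--Yun both use, is to build a \emph{single} QPT distinguisher $\cB$ that samples $j\gets[\ell]$ uniformly and runs $\cB_j$; its advantage is $\frac{1}{\ell}\sum_j\bigl|\Pr[\attack^{H_{j-1}}=1]-\Pr[\attack^{H_j}=1]\bigr|$ (up to signs handled in the usual way), so the total hybrid gap is bounded by $\ell$ times the advantage of one concrete QPT adversary, and the balancing argument then goes through exactly as you wrote it.
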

\begin{proof}
Here is a brief sketch of the proof.  We are going to use the notion of relative
(oracle) indistinguishability introduced in~\cite{Song:2017aa}.  Our random oracle
$H$ would be a very simple one, $H:\{\ast\}\to\cP$, where $\{\ast\}$ is the
singleton set containing only one element.  Given this $p=H(\ast)\gets \cP$, we
define two distributions $D_0, D_1$ of functions of form $\cX\to\cY$.
\begin{itemize}
	\item $D_0$: to sample a function $g$ from $D_0$, sample $k\gets\cK$, and
	define
	\[
	g(x):=f_p(k, x).
	\]
	\item $D_1$: to sample a function $g$ from $D_1$, simply sample a uniform
	random function $g:\cX\to\cY$.
\end{itemize}

Then, the security of $f$ is in fact equivalent to indistinguishability of $D_0$
and $D_1$ relative to the simple oracle $H$.  Again according
to~\cite{Song:2017aa}, when two function distributions are indistinguishable
relative to $H$, then they are oracle-indistinguishable relative to $H$.  We can
also observe that this is equivalent to the oracle security of $f$ defined as
above.
\end{proof}

Finally, the security of the GGM construction comes from the oracle security.

\begin{theorem}
Suppose that $f:\cS\times\cK\times\{0,1\}\to \cK$ is a secure PRF.  Then, the GGM
construction $\ggm[f]:\cS\times\cK\times\{0,1\}^l\to\cK$ is also secure.
\end{theorem}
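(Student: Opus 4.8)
The plan is to run the standard GGM hybrid argument across the $l$ levels of the tree and to control each hybrid step by the \emph{oracle security} of $f$ established in the previous theorem. Since $\ggm[f]$ is literally the cascade construction applied to the one-bit PRF $f$, the quantum analysis follows the template of Song~\cite{Song:2017aa}; the only genuinely new point to check is that the shared public-key component $s\in\cS$, threaded through the entire tree, does not interfere with the argument.

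First I would define hybrids $H_0,\dots,H_l$. In $H_j$, all node labels at depth $j$ of the binary tree are replaced by the values of a truly random function $\rho_j:\bit^{j}\to\cK$, and the remaining $l-j$ levels are computed honestly by iterating $f_s$ from those labels. Then $H_0$ is exactly $\ggm[f]_s(k,\cdot)$ with a uniform root key $k\gets\cK$ and public key $s\gets\cS$, while $H_l$ is a uniformly random function $\bit^l\to\cK$; these are precisely the two oracles in the definition of PRF security. It then remains to bound $\big|\Pr[\attack^{H_{j-1}}(s)=1]-\Pr[\attack^{H_j}(s)=1]\big|$ for each $j$.

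The key step is to view each consecutive pair $(H_{j-1},H_j)$ as an instance of the oracle-security game for $f$ with index set $\cI=\bit^{j-1}$. A depth-$(j-1)$ node with label $i$ carries an independent key $\kappa(i)=\rho_{j-1}(i)\in\cK$; its two children at depth $j$ are either $f_s(\kappa(i),0)$ and $f_s(\kappa(i),1)$ (the $O_0$ case, reproducing $H_{j-1}$) or fresh uniform values (the $O_1$ case, reproducing $H_j$). A distinguisher for $H_{j-1}$ versus $H_j$ thus yields a distinguisher for $O_0$ versus $O_1$ with the same public key $s$: on a leaf query $x=x_1\cdots x_l$ it sets $i:=x_1\cdots x_{j-1}$, forwards $(i,x_j)$ to its oracle to obtain the depth-$j$ label, and then iterates $f_s$ with the suffix bits $x_{j+1},\dots,x_l$ to produce the leaf value; the top $j-1$ levels never need to be materialized, as they are absorbed into the index and the oracle's internal random function $\kappa$. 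Because $\cI=\bit^{j-1}$ may be exponentially large and $\attack$ queries in superposition, it is essential to invoke \emph{oracle} security here rather than plain PRF security. Summing over $j=1,\dots,l$ with $l=\poly(\secpar)$ then gives $\adv^{\mode{prf}}_{\ggm[f]}(\attack)\le l\cdot\max_j \adv^{\mode{os-prf}}_{f,\cI}(\cdot)=\negl(\secpar)$.

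The main obstacle is the quantum superposition access. The reduction must answer a superposition query over $x\in\bit^l$ by coherently extracting the prefix $i$ and bit $x_j$, making a single (superposition) oracle call, and then applying $f_s$ along the remaining path, all as one unitary; and the hybrids must be proven indistinguishable against such queries rather than classical ones. This is exactly where the relative-oracle-indistinguishability machinery of Song~\cite{Song:2017aa} is needed, as it guarantees that level-by-level oracle indistinguishability composes into indistinguishability of the full cascade even under superposition queries. I expect the public-key part $s$ to be harmless: it is fixed once at the root, passed to the adversary, and threaded identically through every hybrid and inside the simulation, so it never needs to be resampled or guessed.
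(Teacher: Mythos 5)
Your proposal is correct and follows essentially the same route as the paper: the paper's proof is a short deferral to the hybrid arguments of Zhandry and of Song--Yun, invoking the preceding theorem (PRF security implies oracle security) to justify each level-by-level hybrid step, which is exactly the argument you spell out in detail, including the observation that the public key $s$ is fixed once and threaded harmlessly through all hybrids. Your write-up simply makes explicit the reduction to the oracle-security game with index set $\cI=\bit^{j-1}$ that the cited works carry out.
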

\begin{proof}
The proof is essentially identical to that of Zhandry~\cite{Zhandry:2012ac} or
Song and Yun~\cite{Song:2017aa}; since $f$ is secure as a PRF, it is also
oracle-secure.  This allows the same hybrid argument in the security proof for GGM
in~\cite{Zhandry:2012ac}, or the security proof for the cascade construction
in~\cite{Song:2017aa}.
\end{proof}

\subsection{Bit commitment schemes based on \OWA and \PRA}
\label{sec:supp}

Based on \OWA (Assumption~\ref{asn:owa}), Brassard and
Yung~\cite{BY90}
describe a bit commitment scheme, which we can easily adapt and
instantiate it with non-abelian group actions. 

\begin{figure}[htb!]
  \centering
\begin{protocol}
  \ul{(Perfectly hiding and computationally binding) Bit commitment}\\[1em]
  Suppose Alice wants to commit to Bob a bit $b\in \{0, 1\}$.
  \begin{enumerate}
  \item Bob samples $s_0\in S$ and $g\in G$, and computes $s_1=g\cdot s_0$.
  \item Bob convinces Alice that $s_0$ and $s_1$ are in the same orbit, using
    the identification protocol in Section~\ref{sec:owa_id-def}.
  \item To commit to $b\in \{0, 1\}$, Alice samples $h\in G$, computes
    $t=h\cdot s_b$, and sends $t$.
  \item To open $b\in \{0, 1\}$, Alice sends $h$ to Bob, and Bob
    verifies that $h\cdot s_b=t$.
  \end{enumerate}
\end{protocol}
  \caption{A bit commitment scheme based on \OWA}
  \label{fig:bit-commit}
\end{figure}

We briefly argue how the security conditions are met. A formal proof
(against both classical and quantum attacks) can be obtained along the
same line. Let $b'=1-b$.
\begin{itemize}
\item \emph{Binding}\footnote{Here we do not consider more
    sophisticated binding notions in the quantum setting
    (Cf.~\cite{UnruhEC16}).}: in order to change her mind, Alice needs
  to compute $h'$ such that $h' \cdot s_{b'}=t$. If she can do that, given
  that she already knows $h \cdot s_b=t$, she can compute $g$. This violates
  the one-way assumption.
\item \emph{Hiding}: from Bob's viewpoint, since $s_0$ and $s_1$ are
  in the same orbit, whichever bit Alice commits, the distributions of
  $t$ are the same.
\end{itemize}

Let us then examine an alternative route to bit commitment based on the 
\PRA assumption. Consider the function below:
\begin{align*}
  T: S \times G & \to S\times S \\
     (s,g) & \mapsto (s, g \cdot s) \, .
\end{align*}
It is easy to see that this gives a quantum-secure \emph{pseudorandom
  generator} (PRG) based upon Assumption~\ref{asn:pra}, assuming that
the size $\abs{S}$ is larger than $\abs{G}$.  Therefore, after we
apply the Blum-Micali amplification to increase the expansion to
triple, we can plug it into Naor's commitment~\cite{Naor91} and get a
\emph{quantum computationally hiding} and \emph{statistically binding}
commitment.

\begin{theorem} There is a \emph{perfectly hiding} and
  \emph{computationally binding} bit commitment, if
  Assumption~\ref{asn:owa} holds; there is a \emph{quantum
    computationally hiding} and \emph{statistically binding}
  commitment scheme, if Assumption~\ref{asn:pra} holds.
  \label{thm:comm}
\end{theorem}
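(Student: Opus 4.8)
The plan is to handle the two halves separately, since they rest on different primitives.

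For the \OWA-based scheme (Figure~\ref{fig:bit-commit}), correctness is immediate from the opening check $h\cdot s_b=t$. I would derive \emph{perfect hiding} directly from Observation~\ref{obs:uniform}. The setup phase uses $\gaids$ to convince Alice that $s_0$ and $s_1$ lie in a common orbit; by the statistical soundness of $\gaids$ (Theorem~\ref{thm:gaids}) we have $O_{s_0}=O_{s_1}=:O$ except with negligible probability. Conditioned on this, for a uniformly sampled $h\gets G$ the committed value $t=h\cdot s_b$ is uniformly distributed on $O$ \emph{regardless} of $b$, so the two commitment distributions coincide and no (even unbounded) receiver learns $b$. I would also note that the orbit-equality proof precedes the commit phase and its transcript is independent of $b$, so it leaks nothing about the bit.

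For \emph{computational binding}, I would reduce to $\gainv$. Suppose a cheating committer produces one $t$ together with valid openings for both bits, i.e.\ $h_0,h_1\in G$ with $h_0\cdot s_0=t=h_1\cdot s_1$. Then $g':=h_1^{-1}\circ h_0$ satisfies $g'\cdot s_0=s_1$, which solves the $\gainv$ instance $(s_0,s_1)$; since honest Bob produces $(s_0,s_1)=(s_0,g\cdot s_0)$ as a genuinely random $\gainv$ instance, this contradicts Assumption~\ref{asn:owa}. The one delicate point is that the reduction must play honest Bob in the setup \emph{without} knowing the witness $g$, i.e.\ it must simulate the orbit-equality proof against a possibly malicious committer. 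I expect this to be the main obstacle: $\gaids$ is established only to be \emph{honest-verifier} zero-knowledge, so a rigorous treatment would invoke the sequential single-bit-challenge variant and a quantum zero-knowledge simulator (in the sense of Watrous's quantum rewinding) that remains sound against a dishonest verifier. This is precisely the quantum subtlety the authors flag elsewhere, and it is the part that requires genuine care rather than routine calculation.

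For the \PRA-based scheme I would follow the standard pipeline. By Assumption~\ref{asn:pra}, the map $T(s,g)=(s,g\cdot s)$ defined above is a quantum-secure pseudorandom generator, and it is genuinely length-increasing because $|G|/|S|=\negl(\secpar)$ (as already used in Theorem~\ref{thm:ga-dual-gen}). First I would amplify its stretch to a factor of three by the usual Blum--Micali length-extension, whose security against quantum distinguishers follows from a standard non-adaptive hybrid argument that goes through verbatim in the quantum setting. Feeding the amplified generator into Naor's commitment~\cite{Naor91}---the receiver sends a random string $r$, and the committer sends the generator's output on a fresh seed, XORed with $r$ exactly when the bit is $1$---yields \emph{statistical binding}, since for all but a negligible fraction of $r$ the two image sets are disjoint, and \emph{quantum computational hiding}, since a quantum distinguisher between the commitment distributions would distinguish the generator's output from uniform and hence break the amplified PRG. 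Combining the two constructions gives both assertions of the theorem.
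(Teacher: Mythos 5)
Your proposal is correct and follows essentially the same route as the paper: the paper likewise argues perfect hiding from the fact that $t=h\cdot s_b$ is uniform on the common orbit (Observation~\ref{obs:uniform}), derives binding by composing the two openings into a group element sending $s_0$ to $s_1$, and obtains the \PRA-based scheme by feeding the PRG $T(s,g)=(s,g\cdot s)$ through Blum--Micali amplification into Naor's commitment. The one place you go beyond the paper's (admittedly informal) sketch is in flagging that the binding reduction must simulate Bob's orbit-equality proof without the witness $g$, which requires zero-knowledge against a malicious verifier rather than mere HVZK; this is a genuine subtlety that the paper's two-line argument elides, and your instinct to resolve it via sequential single-bit challenges with Watrous-style quantum rewinding is the right one.
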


\paragraph{Acknowledgement.} Y.Q. would like to thank Joshua
A. Grochow for explaining the results in \cite{FGS19} to him. Y.Q. was
partially supported by Australian Research Council
DE150100720. F.S. was partially supported by the U.S. National Science
Foundation under CCF-1816869. 
Any opinions, findings, and conclusions
or recommendations expressed in this material are those of the
author(s) and do not necessarily reflect the views of the National
Science Foundation.
A.Y. was supported by Institute of Information \& Communications Technology Planning
\& Evaluation (IITP) grant funded by the Korea government (MSIT) (No.\
2016-6-00598, The mathematical structure of functional encryption and its
analysis).

\bibliography{group-action}
\bibliographystyle{alpha}

\end{document}